\title{Lower bounds for constant query affine-invariant LCCs and LTCs}
\author{
Arnab Bhattacharyya\thanks{Research partially supported by a DST Ramanujan Fellowship.}\\
Indian Institute of Science\\
{\small \texttt{arnabb@csa.iisc.ernet.in}}
\and 
Sivakanth Gopi\thanks{Research partially supported by NSF grants CCF-1523816 and CCF-1217416}\\
Princeton University\\
{\small \texttt{sgopi@cs.princeton.edu}}
}
\date{}
\newcounter{mynotes}
\newcommand{\polylog}{\mathrm{polylog}}
\newcommand{\inpro}[2]{\left\langle #1,#2 \right\rangle}
\newcommand{\norm}[1]{\|#1\|}
\newcommand{\Unorm}[2]{\|#1\|_{U^{#2}}}
\newcommand{\rk}{\mathrm{rank}}
\newcommand{\Supp}{\mathrm{Supp}}
\newcommand{\argmin}{\mathrm{argmin}}
\def\simplex{{\blacktriangle}}
\def\F{{\mathbb{F}}}
\def\Z{{\mathbb{Z}}}
\def\R{{\mathbb{R}}}
\def\K{{\mathbb{K}}}
\def\C{{\mathbb{C}}}
\def\E{{\mathbb E}}
\def\T{{\mathbb T}}
\def\cB{{\mathcal B}}
\def\cH{{\mathcal H}}
\def\cL{{\mathcal L}}
\def\cX{{\mathcal X}}
\def\cC{{\mathcal C}}
\def\cF{{\mathcal F}}
\def\cM{{\mathcal M}}
\def\cN{{\mathcal N}}
\def\cD{{\mathcal D}}
\def\tf{{\tilde{f}}}
\def\hf{{\hat{f}}}
\def\hg{{\hat{g}}}
\def\hD{{\widehat{\cD}}}
\def\hH{{\widehat{H}}}
\def\hh{{\hat{h}}}
\def\bf{\bar{f}}
\def\bg{\bar{g}}
\newcommand{\Tr}{{\rm Tr}}
\newcommand{\ignore}[1]{}
\newtheorem{lemma}{Lemma}
\newtheorem{theorem}{Theorem}
\newtheorem{definition}{Definition}
\newtheorem{remark}{Remark}
\begin{document}
\setcounter{page}{0}
\maketitle

\begin{abstract}
Affine-invariant codes are codes whose coordinates form a vector space over a finite field and which are invariant under affine transformations of the coordinate space. They form a natural, well-studied class of codes; they include popular codes such as Reed-Muller and Reed-Solomon. A particularly appealing feature of affine-invariant codes is that they seem well-suited to admit local correctors and testers. 

In this work, we give lower bounds on the length of locally correctable and locally testable affine-invariant codes with constant query complexity. We show that if a code $\cC \subset \Sigma^{\K^n}$ is an $r$-query locally correctable code (LCC), where $\K$ is a finite field and $\Sigma$ is a finite alphabet, then the number of codewords in $\cC$ is at most $\exp(O_{\K, r, |\Sigma|}(n^{r-1}))$. Also, we show that if $\cC \subset \Sigma^{\K^n}$ is an $r$-query locally testable code (LTC), then the number of codewords in $\cC$ is at most $\exp(O_{\K, r, |\Sigma|}(n^{r-2}))$. The dependence on $n$ in these bounds is tight for constant-query LCCs/LTCs, since Guo, Kopparty and Sudan (ITCS `13) construct affine-invariant codes via lifting that have the same asymptotic tradeoffs.   Note that our result holds for non-linear codes, whereas previously, Ben-Sasson and Sudan (RANDOM `11) assumed linearity to derive similar results.

Our analysis uses higher-order Fourier analysis. In particular, we show that the codewords corresponding to an affine-invariant LCC/LTC must be far from each other with respect to Gowers norm of an appropriate order. This then allows us to bound the number of codewords, using known decomposition theorems which approximate any bounded function in terms of a finite number of low-degree non-classical polynomials, upto a small error in the Gowers norm.
\end{abstract}
\newpage

\section{Introduction}

Error-correcting codes which admit local algorithms are of significant interest in theoretical computer science. A code is  called a \textsf{locally correctable code (LCC)} if there is a randomized algorithm that, given an index $i$ and a received word $w$ close to a codeword $c$ in Hamming distance, outputs $c_i$ by querying only a few positions of $w$. A code is called a \textsf{locally testable code (LTC)} if there is a randomized algorithm that, given a received word $w$, determines whether $w$ is in the code or whether $w$ is far in Hamming distance from every codeword, based on queries to a small number of locations of $w$.  The number of positions of the received word queried  is called the \textsf{query complexity} of the LCC or LTC.

The notions of local correctability and local testability have a long history in computer science by now. Also called ``self-correction'', the idea of local correction originated in works by Lipton \cite{Lipton90} and by Blum and Kannan \cite{BK89}  on program checkers. LCCs are closely related to \textsf{locally decodable codes (LDCs)}, where the goal is to recover a symbol of the underlying message when given a corrupted codeword, using a small number of queries \cite{KT00}. LDCs and LCCs have found applications in  {private information retrieval schemes} \cite{CGKM98, BIW07} and derandomization \cite{STV01}.  See \cite{Yek11} for a detailed survey on LDCs and LCCs. Research on LTCs implicitly started with Blum, Luby, and Rubinfeld's seminal discovery \cite{BLR93} that the Hadamard code is an LTC with query complexity $3$; they were first formally defined by Goldreich and Sudan in \cite{GS06}. LTCs have been used (implicitly and explicitly)  in many contexts, most notably in the construction of PCP's \cite{AS98, ALMSS98, Din07}. 

 In spite of the wide interest in them, some basic questions about LCCs and LTCs remain unanswered. We restrict ourselves throughout to the setting where the query complexity is a constant (independent of the length of the code) and consider the tradeoff between query complexity and code length. The current best constant-query LCCs have exponential length, while the current best constant-query LTCs have near-linear length but they are quite complicated \cite{BS08, Din07, Mei09, Vid15}. Getting subexponential length LCCs or linear length LTCs with constant query complexity are major open problems in the area.
 
Intuitively, for LCCs and LTCs with constant query complexity, there must be a lot of redundancy in the code, since every symbol of the codeword must satisfy local constraints with most other symbols in the codeword. A systematic way to generate redundancy is to make sure that the code has a large group of {\em invariances}\footnote{A quite different way to generate redundancy is through {\em tensoring}; see \cite{BS04}. Invariances and tensoring are essentially the only two ``generic'' reasons known to cause local correctability/testability.}. Formally, given a code $\cC \subset \Sigma^N$ of length $N$ over alphabet $\Sigma$, a codeword $c \in \cC$ can  be naturally viewed as a function $c : [N] \to \Sigma$. Then, we say that $\cC$ is \textsf{invariant} under a set $G \subset \{[N] \to [N]\}$\footnote{$\{A\to B\}$ and $B^A$ denote the set of all functions from $A$ to $B$. } if for every $\pi \in G$ and codeword $c \in \cC$, $c \circ \pi$ also describes a codeword $c' \in \cC$. Now, the key observation is that if for every codeword $c \in \cC$, if there is a constraint among $c(i_1), \dots, c(i_k)$ for some $i_1, \dots, i_k \in [N]$, then for every $c \in \cC$, there must also be a constraint among $c(\pi(i_1)), \dots, c(\pi(i_k))$ for any $\pi$ in the invariance set $G$, since $c \circ \pi$ is itself another codeword. Hence if $G$ is large, the presence of one local constraint immediately implies presence of many and suggests the possibility of local algorithms for the code. This connection between invariance and correctability/testability was first explicitly examined by Kaufman and Sudan \cite{KS08}. One is then motivated to understand more clearly the possibilities and limitations of local correctors/testers for codes possessing natural symmetries.
 
 We focus on \textsf{affine-invariant codes}, for which the domain $[N]$ is an $n$-dimensional vector space $\K^n$ over a finite field $\K$ and the code $\cC \subset \{\K^n \to \Sigma\}$ is invariant under affine transformations $A: \K^n \to \K^n$. Affine invariance is a very natural symmetry for ``algebraic codes'' and has long been studied in coding theory \cite{KLP67}. The study of affine-invariant LCCs and LTCs was initiated in \cite{KS08} and has been investigated in several follow-up works \cite{BS11, Guo13, BRS12, GSVW15}. The hope is that because affine-invariant codes have a large group of invariance and, at the same time, are conducive to non-trivial algebraic constructions, they may contain a code that improves current constructions of LCCs or LTCs.

The current best parameters for constant-query affine-invariant LCCs and LTCs are achieved by the lifted codes of Guo, Kopparty and Sudan \cite{GKS13}. They construct an affine-invariant code $\cF \subset \{\F_{2^\ell}^n \to \F_2\}$ with $\exp(\Theta(n^{r-2}))$ codewords that is an $(r-1)$-query LCC and an $r$-query LTC, where $r = 2^\ell$.  The $\Theta(\cdot)$ notation hides factors that depend on $r$ but not $n$. 
For LCCs, the same asymptotic tradeoff between query complexity and code length is achieved by the Reed-Muller code. For every $r \geq 2$, the Reed-Muller code of order $r-1$ (i.e., polynomials over $\F_q$ on $n$ variables of total degree $\leq r-1$ with $q>r$) is an affine-invariant $r$-query LCC with $\exp(\Theta(n^{r-1}))$ codewords. In fact, even if we drop the affine-invariance requirement, Reed-Muller codes and the construction of \cite{GKS13} achieve the best known codeword length for constant query LCCs\footnote{In contrast, there exist non-affine-invariant LTCs of constant query complexity and inverse polylogarithmic rate. This corresponds to an LTC with $\exp(N/\polylog(N))$ codewords, where $N$ is the code length, while the affine-invariant LTC of \cite{GKS13} and Reed-Muller codes have $\exp(\polylog(N))$ codewords for constant query complexity.}.

 In this work, we show that the parameters for the lifted codes of \cite{GKS13} are, in fact, tight for affine-invariant LCCs/LTCs in $\{\K^n \to \Sigma\}$ for any fixed finite field $\K$ and any fixed finite alphabet $\Sigma$.  
 
 \begin{theorem}[Main Result, informal]\label{main-informal}
~\\[-1.5em]
\begin{enumerate}
\item[(i)]
Let $\cC\subset \{\K^n \to \Sigma\}$ be an $r$-query affine-invariant LCC. Then $|\cC|\le\exp\left(O_{\K,r,|\Sigma|}(n^{r-1})\right)$.
\item[(ii)]
Let $\cC\subset \{\K^n \to \Sigma\}$ be an $r$-query affine-invariant LTC. Then $|\cC|\le\exp\left(O_{\K,r,|\Sigma|}(n^{r-2})\right)$.
\end{enumerate}
\end{theorem}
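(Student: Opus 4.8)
The plan is to prove, as the main lemma, that distinct codewords of an $r$-query affine-invariant LCC (respectively LTC) are separated by a positive constant in the Gowers uniformity norm $U^{r}$ (respectively $U^{r-1}$): measuring a codeword $c$ through the tuple of indicators $\left(\indicator[c(\cdot)=\sigma]\right)_{\sigma\in\Sigma}$, we want that for distinct codewords $f,g$ there is some $\sigma$ with $\|\indicator[f(\cdot)=\sigma]-\indicator[g(\cdot)=\sigma]\|_{U^{r}}\ge\epsilon$ (respectively $U^{r-1}$), where $\epsilon=\epsilon(\K,r,|\Sigma|)>0$. Granting this, a decomposition (regularity) theorem from higher-order Fourier analysis gives, for every codeword $c$, a common polynomial factor $B=(P_1,\dots,P_C)$ with $C=C(\K,r,|\Sigma|)$, each $P_i$ a non-classical polynomial of degree $\le r-1$ (respectively $\le r-2$) over $\K$, together with functions $\Gamma_\sigma$ on the atoms of $B$ such that $\|\indicator[c(\cdot)=\sigma]-\Gamma_\sigma\circ B\|_{U^{r}}\le\epsilon/3$ (respectively $U^{r-1}$) for all $\sigma$. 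Rounding each $\Gamma_\sigma$ to a fixed finite net and using the triangle inequality, the assignment $c\mapsto\bigl(B,\,(\text{rounded }\Gamma_\sigma)_\sigma\bigr)$ is injective on $\cC$. Since there are $\exp\!\left(O_{\K,r,|\Sigma|}(n^{r-1})\right)$ choices of $B$ with degrees $\le r-1$ (respectively $\exp\!\left(O_{\K,r,|\Sigma|}(n^{r-2})\right)$ with degrees $\le r-2$), the number of degree-$\le d$ non-classical polynomials in $n$ variables over a fixed $\K$ being $\exp(O_{\K,d}(n^{d}))$, and only $O_{\K,r,|\Sigma|}(1)$ choices of the rounded $\Gamma$-tuple, the two bounds on $|\cC|$ follow.

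To establish the separation I would first put the local algorithm in a canonical affine-symmetric form, along the lines of Kaufman--Sudan and the structure theory of affine-invariant properties. For an $r$-query LCC this gives, after amplifying the failure probability down to a small constant $\eta$, a single orbit of constraints: a fixed configuration $p_1,\dots,p_r$, a fixed point $y_0\notin\{p_1,\dots,p_r\}$, and a fixed $\psi\colon\Sigma^{r}\to\Sigma$ with $c(Ay_0)=\psi\!\left(c(Ap_1),\dots,c(Ap_r)\right)$ for every codeword $c$ and a $(1-\eta)$-fraction of affine maps $A$. For an $r$-query LTC the tester picks a uniform affine $A$ and accepts iff $\left(c(Ap_1),\dots,c(Ap_r)\right)\in S$ for a fixed $S\subsetneq\Sigma^{r}$; here I would use soundness (not just completeness) to show that $S$ determines one coordinate from the other $r-1$ on all but a small fraction of configurations --- otherwise a word obtained from a codeword by corrupting it on an affine-structured set could not be rejected --- thereby extracting an arity-$r$ identity $c(Ap_1)=\psi\!\left(c(Ap_2),\dots,c(Ap_r)\right)$ valid for most $A$. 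In both cases the relevant point configuration ($\{Ay_0,Ap_1,\dots,Ap_r\}$ of size $r+1$ for LCCs, $\{Ap_1,\dots,Ap_r\}$ of size $r$ for LTCs) is, after parametrising $A$ by its matrix and translation, a system of pairwise non-proportional linear forms, hence has Cauchy--Schwarz complexity at most $r-1$ (respectively $r-2$), so by the generalised von Neumann inequality its weighted averages are controlled by $\|\cdot\|_{U^{r}}$ (respectively $\|\cdot\|_{U^{r-1}}$).

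Now fix distinct codewords $f,g$ and set $F_\sigma=\indicator[f(\cdot)=\sigma]-\indicator[g(\cdot)=\sigma]$, so $|F_\sigma|\le1$ and $\sum_\sigma F_\sigma(x)^2=2\cdot\indicator[f(x)\ne g(x)]$; hence $\sum_\sigma\|F_\sigma\|_2^2=2\,\mathrm{dist}(f,g)\ge 2\delta_{\min}$, where $\delta_{\min}>0$ bounds the minimum distance of $\cC$ (which follows from local correctability/testability together with affine invariance). Pick $\sigma^\ast$ with $\|F_{\sigma^\ast}\|_2^2\ge 2\delta_{\min}/|\Sigma|$. Since $Ay_0$ (respectively $Ap_1$) is uniform over $\K^n$ as $A$ varies, $\|F_{\sigma^\ast}\|_2^2=\E_A\!\left[F_{\sigma^\ast}(Ay_0)^2\right]$; replace one of the two factors using the arity-$r$ identity (the bad $A$'s contribute only $O(\eta)$), expand $\indicator[\psi(\vec v)=\sigma^\ast]=\sum_{\vec a:\,\psi(\vec a)=\sigma^\ast}\prod_i\indicator[v_i=a_i]$, and telescope $\prod_i\indicator[f(Ap_i)=a_i]-\prod_i\indicator[g(Ap_i)=a_i]$ into $r$ terms each carrying a single factor $F_{a_j}(Ap_j)$ and other factors bounded by $1$. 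This writes $\|F_{\sigma^\ast}\|_2^2$, up to an $O_{\K,r,|\Sigma|}(\eta)$ error, as a sum of $O_{r,|\Sigma|}(1)$ expectations, each of the form $\E_A\!\left[(\text{bounded functions at the configuration points})\cdot F_{a_j}(Ap_j)\cdot F_{\sigma^\ast}(Ay_0)\right]$, and the generalised von Neumann inequality bounds each by $\max_\sigma\|F_\sigma\|_{U^{r}}$ (respectively $U^{r-1}$). Rearranging and choosing $\eta$ small enough yields $\max_\sigma\|F_\sigma\|_{U^{r}}\ge\epsilon$ (respectively $U^{r-1}$), completing the main lemma.

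The step I expect to be the main obstacle is the LTC part of the canonical-form reduction: turning the mere existence of a sound $r$-query tester into an honest arity-$r$ ``one evaluation determined by the other $r-1$'' identity, and bounding the fraction of affine configurations on which this fails, genuinely needs the soundness guarantee and is exactly what produces the gap between the LCC exponent $r-1$ and the LTC exponent $r-2$. Minor technical points along the way include: carrying out the affine-invariant reductions over a general finite alphabet $\Sigma$ rather than a field; verifying the Cauchy--Schwarz complexity bounds for the point configurations, including degenerate positions that arise when $|\K|$ is small; and stating the decomposition theorem for a bounded family of $\Sigma$-valued functions with a single common polynomial factor while tracking the count $\exp(O_{\K,d}(n^{d}))$ of degree-$\le d$ non-classical polynomials.
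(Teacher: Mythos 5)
Your overall architecture --- show that distinct codewords are $\Omega(1)$-separated in $\Unorm{\cdot}{r}$ (LCC) resp.\ $\Unorm{\cdot}{r-1}$ (LTC), then count via a decomposition-theorem net of degree-$<r$ resp.\ degree-$<r-1$ non-classical polynomials --- is exactly the paper's, and your counting step and von Neumann/complexity bookkeeping are right. For the LCC half, however, you route through a Kaufman--Sudan-style canonicalization (a single configuration $p_1,\dots,p_r,y_0$ and a single $\psi:\Sigma^r\to\Sigma$ with $c(Ay_0)=\psi(c(Ap_1),\dots,c(Ap_r))$ for all codewords and $1-\eta$ of affine $A$). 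No such single-orbit form is available for the non-linear, arbitrary-alphabet, arbitrary-query-distribution LCCs considered here, and it is not needed: the paper simply applies the corrector guarantee to $f\circ\ell$ for every invertible affine $\ell$ (affine invariance makes $f\circ\ell$ a codeword), keeps the corrector's own distribution $\cM_{y_0}$ and operators $\cD_{y_0,y_1,\dots,y_r}$, exchanges expectations, and runs the simplex-extension/telescoping/von Neumann computation per fixed tuple after parametrizing the affine orbit by fresh variables $z_0,\dots,z_t$. Your indicator computation survives this repair essentially verbatim, so part (i) is fixable (note also that the minimum distance must be derived from the correction property by a midpoint argument, as in the paper).

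The genuine gap is the LTC step you yourself flag as the main obstacle: converting soundness into an arity-$r$ functional identity $c(Ap_1)=\psi(c(Ap_2),\dots,c(Ap_r))$ valid for most $A$. You offer only the heuristic that otherwise ``a word obtained from a codeword by corrupting it on an affine-structured set could not be rejected,'' but such a local dependence does not follow from soundness in general: the accepting set $S\subset\Sigma^r$ need not determine any coordinate from the others, and soundness only constrains words that are $\tau$-far from the \emph{entire} code --- manufacturing such a far word that still passes, starting from local non-determinacy on a fraction of configurations, is precisely the hard (and unargued) part. The paper's proof of part (ii) avoids any such extraction: assuming $\Unorm{\hf_i-\hg_i}{r-1}\le\epsilon$ for all $i$, it forms the random hybrid word $H$ (each coordinate independently $f(x)$ or $g(x)$ with probability $1/2$), uses the multilinear extension of the test predicate so that $\E_H$ replaces $\hat{H}$ by $(\hf+\hg)/2$, and then a hybrid/telescoping argument with the generalized von Neumann inequality (now with $r$ linear forms, hence $U^{r-1}$ control) shows the tester accepts $H\circ\ell$ with expected probability at least $1/2$; Markov, soundness, and affine invariance then force a constant fraction of the $2^{\Delta(f,g)|\K|^n}$ hybrid words to lie within $\delta/3$ of the code, contradicting a packing bound since the code has minimum distance $\delta$ (which, for LTCs, is part of the definition rather than a consequence). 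Soundness is thus used only as a black box, never as a source of a local identity; without adopting this (or some other) substitute for your unproven canonical-form step, your outline does not yield part (ii).
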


\subsection{Related Work}
 Ben-Sasson and Sudan in \cite{BS11} obtained a similar result as Theorem~\ref{main-informal}, when the code is assumed to be linear,  i.e., when the codewords form a vector space. They showed that if  $\cC \subset \{\K^n \to \F\}$ is an $(r-1)$-query locally correctable or $r$-query locally testable {\em linear}, affine-invariant code, where $\K$ and $\F$ are finite fields of characteristic $p>0$ with $\K$ an extension of $\F$, then the dimension of $\cC$ as a vector space over $\F$ is at most $(n \log_p |\K|)^{r-2}$. When $\K$ is fixed (as in \cite{GKS13}'s construction of constant query LCCs/LTCs), the result of \cite{BS11} is a very special case of our Theorem \ref{main-informal}. On the other hand, \cite{BS11}'s result also applies when the size of $\K$ is growing (as long as $\K$ extends $\F$), whereas ours does not.
 
There are several works which study lower bounds for constant query LCCs~\cite{KT00,GKST06,DS07,KdW03,BDYW11,BDSS11,Woo12,DSW14}. For general (non-affine-invariant) LCCs, tight lower bounds are known only for 2-query LCCs.
Kerendis and deWolf~\cite{KdW03} prove that if $\cC \subset \{\{0,1\}^n\to\Sigma\}$ is a 2-query LCC\footnote{Their lower bound also holds for the weaker notion of locally decodable dode (LDC)}, then $|\cC|\le \exp(O(n|\Sigma|^5)).$ 
This is tight for constant $\Sigma$ and achieved by the Hadamard code. For $r$-query LCCs where $r>2$, the lower bounds known are much weaker. 
The best known bounds, due to~\cite{KdW03,Woo07}, show that if $\cC \subset \{\{0,1\}^n\to\{0,1\}\}$ is an $r$-query LCC, then $$|\cC|\le \exp\left(2^{n/(1+1/(\lceil r/2 \rceil+1))+o(n)}\right).$$ Higher-order Fourier analysis was applied to other problems in coding theory in~\cite{BL15,TW14}.
 
\subsection{Proof Overview}
Our arguments are based on standard techniques from higher-order Fourier analysis \cite{Tao12}, but they are new in this context. We show that if an affine-invariant code is an $r$-query LCC, then its codewords are far from each other in the $U^r$-norm, the {\em Gowers norm of order $r$}. Similarly, we show that the codewords of an affine-invariant $r$-query LTC are far from each other in the $U^{r-1}$-norm. Therefore, we can upper bound the number of LCC/LTC codewords in terms of the size of a net that is fine enough with respect to the Gowers norm of an appropriate order.  We bound the size of such a net by explicitly constructing one using a standard decomposition theorem (analogous to Szemer\'edi's regularity lemma): any bounded function $f: \K^n \to \C$ can be approximated, upto a small error in the Gowers norm, by a composition of a bounded number of low-degree non-classical polynomials \cite{TZ12}. 

The way we argue that two codewords $f$ and $g$ of an $r$-query LCC are far in the Gowers norm is that if $\|f-g\|_{U^r} < \epsilon$, then for small enough $\epsilon$ (with respect to $r$, $|\Sigma|$ and correctness probability), the local corrector when applied to $f$ can act as if it is applied to $g$. The argument is, briefly, as follows. On the one hand, the codewords $f$ and $g$ must be far in Hamming distance, because the definition of LCC implies that there is a unique codeword close to any string. So, with constant probability over choice of $y \in \K^n$, the local corrector's guess for $f(y)$ must differ from $g(y)$. On the other hand, we can lower bound by a constant the probability of the event that the corrector outputs $g(y)$ when it queries coordinates of $f$, because $f$ and $g$ are close in the $\|\cdot\|_{U^r}$ norm. This last calculation uses the affine invariance of the code and the {\em generalized von Neumann inequality}, which bounds by $\|f_0\|_{U^{k}}$ the expectation over $z_1, \dots, z_m \in \K^n$ of the product $\prod_{i=0}^{k} f_i(\cL_i(z_1, \dots, z_m))$, where the $\cL_i$'s are arbitrary linear forms so that no two are linearly dependent and $f_i:\K^n\to \C$ are arbitrary functions with $|f_i|\le 1$.

 The argument for $r$-query LTCs is similar. Suppose $f$ and $g$ are close in the  $\|\cdot\|_{U^{r-1}}$ norm. Consider the random function $H$ such that for every $x$ independently, $H(x)$ equals $f(x)$ with probability $1/2$ and $g(x)$ with probability $1/2$. $H$ itself is far from a codeword with high probability. But we show that since the local tester accepts $f$, it will also accept $H \circ \ell$ for a random invertible affine map $\ell : \K^n \to \K^n$ with good probability. This implies that with good probability, $H\circ\ell$ is close to a codeword and by affine-invariance, $H$ itself is close to a codeword which gives a contradiction. To draw this conclusion, we again use the generalized von Neumann inequality as well as a hybrid argument.
 
 \paragraph{Organization.}
 Section \ref{sec:prelim} contains preliminaries that lay the foundations of our analysis. Section \ref{sec:lcc} proves the first part of our main result about LCCs, while Section \ref{sec:ltc} proves the second part about LTCs.
\section{Preliminaries}\label{sec:prelim}

\subsection{Error-correcting codes}
Let $\cX$ be a finite set called the set of coordinates and $\Sigma$ be an other finite set called the alphabet. Let $\Sigma^\cX$ denote the set of all functions from $\cX\to\Sigma$. A subset $\cC\subset \Sigma^\cX$ is called a code and its elements are called \textit{codewords}.

\begin{definition}[Hamming distance]
Given $f,g\in \Sigma^\cX$, we define the normalized Hamming distance between $f$ and $g$ is defined as $\Delta(f,g):=\Pr_{x\in \cX}[f(x)\ne g(x)]$ where $x$ is uniformly chosen from $\cX$. For a code $\cC\subset\Sigma^\cX$, we define the minimum distance of $\cC$ as $\min_{f,g\in \cC, f\ne g} \Delta(f,g)$.
\end{definition}

Let $\simplex_\Sigma=\{q:\Sigma\to \R_{\ge 0}:\sum_{i \in \Sigma}q(i)=1\}$ denote the probability simplex on $\Sigma$. We embed $\Sigma$ into $\simplex_\Sigma$ by sending $i\in \Sigma$ to $e_i$ which is the $i^{th}$ coordinate vector in $\R^{\Sigma}$. This also lets us extend functions $f:\cX\to\Sigma$ to $\hf:\cX\to \simplex_\Sigma$ using the embedding. We call $\hf$ the simplex extension of $f$. Now given $f,g\in \Sigma^\cX$, we can write the Hamming distance between them as $$\Delta(f,g)=1-\Pr_{x\in \cX}[f(x)=g(x)]=1-\E_{x\in \cX}\langle\hf,\hg\rangle$$ where $\inpro{\cdot}{\cdot}$ is the standard inner product in $\R^\Sigma$.

\begin{definition}[Affine invariance]
Let $\cX$ be a finite dimensional vector space over some finite field $\K$, then $\cC\subset \Sigma^\cX$ is called affine invariant if for every $f\in \cC$ and every invertible affine map $\ell:\cX\to \cX$, $f\circ \ell\in \cC$.
\end{definition}
Locally correctable and testable codes are defined formally in Sections \ref{sec:lcc} and \ref{sec:ltc} respectively.

\subsection{Higher order Fourier analysis}
Fix a finite field $\F_p$ of prime order $p$, and let $\K = \F_q$ where $q = p^t$ for a positive integer $t$. $\K$ is then a vector space of dimension $t$ over $\F_p$. We denote by $\Tr: \K \to \F_p$ the {\em trace function}:
$$\Tr(x) = x + x^p + x^{p^2} + \cdots + x^{p^{t-1}}.$$ Also, we use $|\cdot|$ to denote the obvious map from $\F_p$ to $\{0, 1, \dots, p-1\}$.

Given functions $f,g:\K^n\to \C$, we define their inner product as $\inpro{f}{g}=\E_x[\overline{f(x)}g(x)]$ where $x$ is chosen uniformly from $\K^n$. We define $\norm{\cdot}_p$-norm on such functions as $\norm{f}_p=\E_x[|f(x)|^p]^{1/p}$. We say a function $f:\K^n\to \C$ is {\em bounded} if $|f|\le 1$. Let $\T$ denote the circle group $\R/\Z$ and $e:\T\to\C$ be the map given by $e(x)=\exp(2\pi i x)$.

\begin{definition}[Non-classical Polynomials]
A {\em non-classical polynomial of degree $< d$} is a function $f:\K^n\to \T$ if $$\forall h_1,h_2\cdots,h_{d}\in \K^n\ \ D_{h_1}D_{h_2}\cdots D_{h_{d}}f=0$$ where $D_h$ is the difference operator defined as $D_hf(x)=f(x+h)-f(x)$.
For such an $f$, the function $e(f)$ is called a {\em non-classical phase polynomial of degree $<d$}.
\end{definition}
Let $\alpha_1,\cdots,\alpha_t\in \K$ be a basis for $\K$ when viewed as a vector space over $\F_p$. It is known \cite{TZ12, BB15} that non-classical polynomials of degree $\le d$ are exactly those functions $P: \K^n \to \T$ which have the following form:
\begin{equation}\label{eqn-frm}
P(x_1, \dots, x_n) = \theta + \sum_{k \geq 0} \sum_{\substack{0 \leq d_{i,j} < p~ \forall i \in [n], j \in [t]; \\ 0 < \sum_{i=1}^n \sum_{j=1}^t d_{i,j} \leq d-k(p-1)}} \frac{c_{d_{1,1}, \dots, d_{n,t}, k} \prod_{i=1}^n \prod_{j=1}^t |\Tr(\alpha_j x_i)|^{d_{i,j}}}{p^{k+1}} \pmod 1
\end{equation}
for some $c_{d_{1,1}, \dots, d_{n,t}, k}\in\{0,1,\cdots,p-1\}$ and $\theta\in \T$. Next, we define the Gowers norm for arbitrary functions $f: \K^n \to \C$.


\begin{definition}[Gowers uniformity norm~\cite{Gowers01}]
For a function $f:\K^n\to \C$, the {\em Gowers norm of order $r$}, denoted by $\Unorm{\cdot}{r}$, is defined as 
$$\Unorm{f}{r}=\left(\E_{x,h_1,\cdots,h_r\in \K^n}[\Delta_{h_1}\Delta_{h_2}\cdots \Delta_{h_r}f(x)]\right)^{1/2^r}$$ where $\Delta_{h}$ is the multiplicative difference operator defined as $\Delta_hf(x)=f(x+h)\overline{f(x)}$.
\end{definition}
The Gowers norm is an actual norm when $r\ge 2$. It also satisfies a useful monotonicity property: for any function $f: \K^n \to \C$,
$$|\E[f(x)]|=\Unorm{f}{1}\le \Unorm{f}{2}\le \cdots \le \Unorm{f}{r}\le \cdots \le \norm{f}_{\infty}.$$ See~\cite{Tao12} for more on Gowers norm. Observe that if $f:\K^n\to\C$ is a non-classical phase polynomial of degree $<r$ then $\Unorm{f}{r}=1$. The inverse Gowers theorem is a partial converse to this. It shows that the Gowers norm of order $r$ of a function is in direct correspondence with its correlation with non-classical phase polynomials of degree $<r$. In particular:
\begin{lemma}[Inverse Gowers theorem~\cite{TZ12}]
\label{lem-inversegowers}
For any bounded $f:\K^n\to \C$, if $\Unorm{f}{r}>\delta$ then there exists a non-classical polynomial $P$ of degree $< r$ such that $$|\inpro{f}{e(P)}|\ge c(\delta,\K,r)$$ where $c(\delta,\K,r)$ is a constant depending only on $\delta,\K,r$.
\end{lemma}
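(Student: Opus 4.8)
The plan is to prove this by induction on the order $r$, following the Bergelson--Tao--Ziegler / Tao--Ziegler strategy; the qualitative high-characteristic statement has an ergodic-theoretic proof, but the uniform, \emph{low}-characteristic version --- which is precisely why the conclusion must allow non-classical $P$ --- is the combinatorial argument of \cite{TZ12}, and in the paper itself one simply invokes that reference. The base case $r=2$ is ordinary Fourier analysis on $\K^n$: expanding $\Unorm{f}{2}^4=\sum_\chi |\E_x[f(x)\overline{\chi(x)}]|^4$ over the additive characters $\chi$ of $\K^n$ and using Parseval (so $\sum_\chi|\E_x[f(x)\overline{\chi(x)}]|^2\le 1$), the hypothesis $\Unorm{f}{2}>\delta$ forces some $\chi$ with $|\E_x[f(x)\overline{\chi(x)}]|>\delta^2$; writing $\chi(x)=e(|\Tr(\inpro{\xi}{x})|/p)$ exhibits $\chi=e(P_1)$ for a degree-$\le 1$ polynomial $P_1$, giving the claim with $c(\delta,\K,2)=\delta^2$.

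For the inductive step, assume the statement for $r-1$ and suppose $\Unorm{f}{r}>\delta$. From the identity $\Unorm{f}{r}^{2^r}=\E_{h\in\K^n}\Unorm{\Delta_h f}{r-1}^{2^{r-1}}$ together with $\Unorm{\Delta_h f}{r-1}\le 1$, a Markov argument produces a set $H\subset\K^n$ of density $>\delta^{2^r}/2$ with $\Unorm{\Delta_h f}{r-1}\ge\delta'$ for all $h\in H$, where $\delta'=(\delta^{2^r}/2)^{1/2^{r-1}}$. By the induction hypothesis, for each $h\in H$ there is a non-classical polynomial $Q_h$ of degree $<r-1$ with $|\inpro{\Delta_h f}{e(Q_h)}|\ge\rho:=c(\delta',\K,r-1)$. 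The heart of the proof is then to show that the family $(Q_h)_{h\in H}$ coheres like the family of derivatives $(\Delta_h P)_h$ of a single non-classical polynomial $P$ of degree $<r$.

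\textbf{Symbol coherence (the main obstacle).} Pass to the top-degree ``symbol'' $\sigma(h):=\mathrm{symb}(Q_h)$, living in the $\F_p$-vector space of degree-$(r-2)$ polynomials modulo lower-degree ones. Several applications of the Gowers--Cauchy--Schwarz inequality --- exploiting the commutativity $\Delta_{h_1}\Delta_{h_2}f=\Delta_{h_2}\Delta_{h_1}f$ pushed through the correlations above --- show that $\sigma$ is an approximate Freiman homomorphism: for $\gg_\rho 1$ additive quadruples $h_1+h_2=h_3+h_4$ inside $H$ one has $\sigma(h_1)+\sigma(h_2)=\sigma(h_3)+\sigma(h_4)$. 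Feeding this into the Balog--Szemer\'edi--Gowers theorem and Freiman's theorem over $\F_p^n$ yields an affine-linear map agreeing with $\sigma$ on a dense subset, and that map is exactly $\mathrm{symb}(\Delta_h P^{(r-1)})$ for a genuine non-classical polynomial $P^{(r-1)}$ of degree $<r$. Replacing $f$ by $f\cdot e(-P^{(r-1)})$ lowers the degree of the derivative polynomials by one, and a secondary downward induction on that degree eventually reduces to the case where, for a positive-density set of $h$, $\Delta_h f$ correlates with $e(\Delta_h P)$ for a fixed non-classical $P$ of degree $<r$; equivalently $|\E_x\Delta_h g(x)|\ge\rho'$ for those $h$, where $g:=f\cdot e(-P)$ and $\Delta_h g(x)=\Delta_h f(x)\cdot\overline{e(\Delta_h P(x))}$.

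Finally, $\Unorm{g}{2}^4=\E_h|\E_x\Delta_h g(x)|^2\ge(\text{density of good }h)\cdot\rho'^2>0$, so by the $r=2$ case $g=f\cdot e(-P)$ correlates with a degree-$\le 1$ phase $e(P_1)$; hence $|\inpro{f}{e(P+P_1)}|\ge c(\delta,\K,r)$ with $P+P_1$ a non-classical polynomial of degree $<r$, which is the desired conclusion. As indicated, I expect the genuine difficulty to be the symbol-coherence step: running the additive-combinatorial machinery while keeping every constant uniform in $n$, and coping with the fact that ``integrating'' prescribed derivatives back to a polynomial truly forces non-classical polynomials when $p\le r$ (one may be unable to divide by $p$ inside $\T$ without leaving $\tfrac1p\Z/\Z$) --- this is the phenomenon isolated in \cite{TZ12}. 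Everything else is bookkeeping with Cauchy--Schwarz and the basic monotonicity and derivative identities for the Gowers norm.
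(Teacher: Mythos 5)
You should first note that the paper contains no proof of Lemma~\ref{lem-inversegowers} to compare against: it is imported wholesale from Tao--Ziegler \cite{TZ12}, and citing that theorem is the intended ``proof''. Measured against the actual argument it stands in for, your outline has the right skeleton --- the identity $\Unorm{f}{r}^{2^r}=\E_h\Unorm{\Delta_h f}{r-1}^{2^{r-1}}$, Markov to get a dense set of $h$ with $\Delta_h f$ correlating with a degree-$<r-1$ phase $e(Q_h)$, and your base case $r=2$ with $c(\delta,\K,2)=\delta^2$ is correct --- but the step you label ``symbol coherence'' is asserted rather than proved, and it \emph{is} the theorem. The Balog--Szemer\'edi--Gowers-plus-Freiman linearization you invoke is the Green--Tao/Samorodnitsky argument for the $U^3$ case, where $Q_h$ is affine and its symbol $\sigma(h)$ is a vector in $\F_p^{nt}$, so approximate additivity of $\sigma$ on many quadruples is literally the hypothesis of BSG/Freiman in an abelian group. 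For general $r$ the symbol lives in the enormous space of degree-$(r-2)$ (already non-classical) polynomials modulo lower order; BSG/Freiman only give ``large subset of a coset of a subgroup'' structure and do not produce a map of the form $h\mapsto\mathrm{symb}(\Delta_h P)$, and the subsequent integration of an approximately linear family of prescribed derivatives into a single degree-$<r$ polynomial is a hard cocycle-type problem, aggravated in low characteristic (where integration forces values in $\tfrac{1}{p^k}\Z/\Z$ and one must track depth and use equidistribution/rank theory for non-classical polynomial factors). None of this follows from ``several applications of Gowers--Cauchy--Schwarz'' plus a secondary induction, so as written the middle of your proof is a placeholder for the bulk of \cite{TZ12}.

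A secondary inaccuracy: the proof in \cite{TZ12} is not the finitary combinatorial argument you describe; it builds on the infinitary (ultralimit/ergodic-flavored) framework of Bergelson--Tao--Ziegler, which is precisely why the constant $c(\delta,\K,r)$ is ineffective --- a point the present paper echoes when it blames its Ackermann-type constants on higher-order Fourier analysis. So your sketch is neither the paper's route (a citation) nor a compressed version of the cited proof, but an independent plan that is only known to be executable at $r=3$; for general $r$ you would essentially be reproving \cite{TZ12}, and the coherence-plus-integration step needs a genuinely new writeup rather than bookkeeping.
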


A linear form on $m$ variables is a vector $\cL=(w_1,\cdots,w_m)\in \K^m$ that is interpreted as a function $\cL:(\K^n)^m\to \K^n$ via the map $(x_1,\cdots,x_m)\mapsto \sum_{i=1}^m w_ix_i$. A key reason that the Gowers norm is useful in applications is that if a function has small Gowers norm of the appropriate order, then it behaves pseudorandomly in a certain way with respect to linear forms.
\begin{lemma}[Generalized von Neumann inequality (Exercise 1.3.23 in~\cite{Tao12})]
\label{lem-gowerscs}
Let $f_0,f_1,f_2,\cdots,f_k:\K^n \to \C$ be bounded functions and let $\cL=\{\cL_0,\cL_1,\cdots,\cL_k\}$ be a system of $k+1$ linear forms in $m$ variables such that no form is a multiple of another. Then
$$|\E_{z_1,\cdots,z_m\in \K^n}[\prod_{i=0}^k f_i(\cL_i(z_1,\cdots,z_m))]|\le \min_{0\le i\le k} \Unorm{f_i}{k}$$
\end{lemma}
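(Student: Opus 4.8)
The plan is to prove the inequality with $\Unorm{f_0}{k}$ on the right-hand side; the stated bound with $\min_{0\le i\le k}$ then follows by relabeling, since the hypothesis on $\cL$ treats the $k+1$ forms symmetrically. I would induct on $k$. For the base case $k=1$: the two forms $\cL_0,\cL_1$ are linearly independent (neither is a multiple of the other), so $z\mapsto(\cL_0(z),\cL_1(z))$ is a surjective $\K$-linear map and hence pushes the uniform distribution on $(\K^n)^m$ onto the uniform distribution on $\K^n\times\K^n$; thus the expectation equals $(\E_x f_0(x))(\E_y f_1(y))$, whose modulus is at most $|\E_x f_0(x)|=\Unorm{f_0}{1}$ since $|f_1|\le1$.

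For the inductive step, let $k\ge2$. Since an invertible linear change of the variables $z_1,\dots,z_m$ preserves the uniform measure and the ``no form is a multiple of another'' condition, I may assume $\cL_k=z_1$. Write $\cL_i=c_iz_1+\cL_i'(z_2,\dots,z_m)$ for $i<k$; because $\cL_i$ is not a multiple of $\cL_k=z_1$, the form $\cL_i'$ in the remaining variables $\vec z:=(z_2,\dots,z_m)$ is nonzero, and in particular $\cL_0'\ne0$. Splitting the average as $\E_{z_1}\E_{z_2,\dots,z_m}$ and pulling out $f_k(z_1)$, the quantity to bound is $|\E_{z_1}[f_k(z_1)\,T(z_1)]|$ with $T(z_1):=\E_{\vec z}[\prod_{i<k}f_i(c_iz_1+\cL_i'(\vec z))]$, and Cauchy--Schwarz plus $|f_k|\le1$ bounds it by $(\E_{z_1}|T(z_1)|^2)^{1/2}$; note that $T$ still contains $f_0$, so $f_0$ is not destroyed. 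I would then expand this square as an average over $z_1$ and two independent copies $\vec z,\vec z'$ of $(z_2,\dots,z_m)$, substitute $\vec z'=\vec z+\vec h$ for a fresh tuple $\vec h$, and observe that for each fixed $\vec h$ the inner average becomes $\E_{z_1,\vec z}[\prod_{i<k}g_i(M_i)]$, where $M_i:=c_iz_1+\cL_i'(\vec z)$ and $g_i(x):=f_i(x)\overline{f_i(x+\cL_i'(\vec h))}$ remain bounded.

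The heart of the argument is then a bookkeeping check: $M_i$ is a scalar multiple of $M_j$ (as forms in $z_1,\dots,z_m$) exactly when $\cL_i$ is a scalar multiple of $\cL_j$, so only for $i=j$, and each $M_i\ne0$ since $\cL_i'\ne0$; hence $M_0,\dots,M_{k-1}$ is an admissible system of $k$ forms and the induction hypothesis yields $|\E_{z_1,\vec z}[\prod_{i<k}g_i(M_i)]|\le\Unorm{g_0}{k-1}$, where $g_0=g_0^{\vec h}$ depends on $\vec h$. Averaging over $\vec h$ gives $\E_{z_1}|T(z_1)|^2\le\E_{\vec h}\Unorm{g_0^{\vec h}}{k-1}$. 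To finish I would use that $g_0^{\vec h}(x)=\overline{\Delta_{t}f_0(x)}$ with $t:=\cL_0'(\vec h)$, that the Gowers norm is invariant under conjugation and translation, and that $t$ is uniform on $\K^n$ because $\cL_0'\ne0$; then by Jensen's inequality (the exponent $2^{k-1}$ is $\ge1$) and the identity $\Unorm{f_0}{k}^{2^k}=\E_t\Unorm{\Delta_t f_0}{k-1}^{2^{k-1}}$, which is immediate from the definition of the Gowers norms, $\E_{\vec h}\Unorm{g_0^{\vec h}}{k-1}=\E_t\Unorm{\Delta_t f_0}{k-1}\le\Unorm{f_0}{k}^2$. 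Chaining these bounds, $\E_{z_1}|T(z_1)|^2\le\Unorm{f_0}{k}^2$, so the left-hand side of the lemma is at most $\Unorm{f_0}{k}$, completing the induction.

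I expect the main obstacle to be precisely this linear-algebra bookkeeping: normalizing $\cL_k$ and splitting off $f_k$ in the unique way that keeps $f_0$ present (which is exactly where the hypothesis $\cL_0\not\propto\cL_k$, equivalently $\cL_0'\ne0$, enters), then checking that after introducing the difference variable $\vec h$ the new forms $M_0,\dots,M_{k-1}$ are still pairwise non-proportional so the induction hypothesis applies, and matching the order $U^{k-1}$ of the Gowers norm it produces (on the discrete derivative $\Delta_t f_0$) back up to $U^k$ on $f_0$ via the recursive definition and a power-mean inequality. The remaining ingredients — the change of variables, the substitution $\vec z'=\vec z+\vec h$, and conjugation/translation invariance of $\Unorm{\cdot}{k-1}$ — are routine.
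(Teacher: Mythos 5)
Your proof is correct, but it takes a genuinely different route from the paper's. The paper first proves an auxiliary ``box''-type inequality (Lemma~\ref{lem-gowerscs-basic}, Exercise 1.3.22 in \cite{Tao12}): if each bounded $g_i$ is independent of the $i$-th coordinate, then $|\E_{x_1,\dots,x_k}[f(x_1+\cdots+x_k)\prod_i g_i]|\le\Unorm{f}{k}$, established by induction with repeated Cauchy--Schwarz; it then derives the lemma in one stroke by choosing, for each $i\ge 1$, a vector $v_i$ with $\cL_0(v_i)=1$ and $\cL_i(v_i)=0$ (this only uses that $\cL_0$ is not a multiple of $\cL_i$) and substituting $(z_1,\dots,z_m)\mapsto(x_1,\dots,x_m)+\sum_i y_i v_i$, which turns $f_0$'s argument into $\cL_0(x)+\sum_j y_j$ while making the $i$-th factor independent of $y_i$; conditioning on $x$ and applying the auxiliary lemma in the $y$-variables finishes. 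You instead induct on the number of forms directly: normalize $\cL_k=z_1$, peel off $f_k$ by one Cauchy--Schwarz, recognize the expanded square (for fixed $\vec h$) as an instance of the statement for $k-1$ applied to the multiplicative derivatives $g_i^{\vec h}$ along what are literally the forms $\cL_0,\dots,\cL_{k-1}$ in the new coordinates, and climb back from $\Unorm{\Delta_t f_0}{k-1}$ to $\Unorm{f_0}{k}$ via the recursion $\Unorm{f_0}{k}^{2^k}=\E_t\Unorm{\Delta_t f_0}{k-1}^{2^{k-1}}$, uniformity of $t=\cL_0'(\vec h)$, and Jensen. The delicate points you flag do check out: $\cL_i'\ne0$ because $\cL_i\not\propto\cL_k$, pairwise non-proportionality is inherited, and conjugation invariance of the Gowers norm handles $g_0^{\vec h}=\overline{\Delta_t f_0}$. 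What each buys: the paper's single change of variables confines all the Cauchy--Schwarz to a clean standalone lemma and, per application, uses only that $\cL_0$ is non-proportional to each other form (so it in fact proves a slightly stronger statement); your induction is self-contained, avoids the auxiliary lemma, and makes the recursive structure of the Gowers norm explicit, at the cost of redoing the linear-algebra bookkeeping (and invoking the full pairwise hypothesis) at every level.
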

See Appendix~\ref{sec-proofofgowerscs} for proof.

\subsection{A net for Gowers norm}
The goal of this section is to establish the following claim.
\begin{theorem}[$\epsilon$-net for $U^r$ norm]
\label{lem-epsilonnet}
The metric induced by the $\Unorm{\cdot}{r}$ norm on the space of all bounded functions $\{f:\K^n\to \C\}$ has an $\epsilon$-net of size $\exp(O_{\epsilon,\K,r}(n^{r-1}))$.
\end{theorem}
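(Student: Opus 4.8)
The plan is to construct the net explicitly by invoking the decomposition theorem that accompanies the inverse Gowers theorem (Lemma~\ref{lem-inversegowers}): every bounded function $f:\K^n\to\C$ can be approximated in $\Unorm{\cdot}{r}$, up to any desired error, by a bounded function of the form $\Gamma(e(P_1(x)),\dots,e(P_C(x)))$, where $P_1,\dots,P_C$ are non-classical polynomials of degree $<r$, $C = C(\epsilon,\K,r)$ is a constant not depending on $n$, and $\Gamma$ is some function on the product of circle groups. Concretely, I would first fix a target error, say $\epsilon/2$, and apply the decomposition theorem to obtain such a structured approximant; this reduces the problem to counting the number of distinct structured functions. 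Then a net is obtained by discretizing the two remaining sources of continuous freedom: the ``coefficients'' of each non-classical polynomial, and the output values of $\Gamma$.

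The key observation that controls the count is the explicit form~\eqref{eqn-frm} of non-classical polynomials of degree $<r$. A degree-$<r$ non-classical polynomial on $n$ variables over $\K=\F_{p^t}$ is determined by: a constant $\theta\in\T$, which can be rounded to one of $O_{\epsilon,\K,r}(1)$ values; and the coefficients $c_{d_{1,1},\dots,d_{n,t},k}\in\{0,1,\dots,p-1\}$ indexed by exponent tuples $(d_{i,j})$ with $0\le d_{i,j}<p$ and $0 < \sum_{i,j} d_{i,j} \le r-1-k(p-1)$, together with the ``depth'' parameter $k$, which ranges over $O_r(1)$ values. Since the total degree of each monomial is at most $r-1$ and each variable $x_i$ contributes at most $t$ sub-variables $|\Tr(\alpha_j x_i)|$, the number of admissible exponent tuples is $O_{\K,r}(n^{r-1})$: we are choosing a multiset of size at most $r-1$ from the $nt$ symbols $\{|\Tr(\alpha_j x_i)|\}$, with bounded multiplicities. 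Hence each non-classical polynomial of degree $<r$ is specified by $O_{\K,r}(n^{r-1})$ elements of a bounded-size set, so there are $\exp(O_{\K,r}(n^{r-1}))$ of them. Taking $C = C(\epsilon,\K,r)$ of these and an $\epsilon/2$-fine discretization of $\Gamma$ (whose domain $\T^C$ has bounded-size dimension, so a suitable finite grid of output assignments suffices, contributing only an $O_{\epsilon,\K,r}(1)$ factor to the exponent), we get a family of structured functions of size $\exp(O_{\epsilon,\K,r}(n^{r-1}))$.

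It remains to check that this family is genuinely an $\epsilon$-net. Given arbitrary bounded $f$, the decomposition theorem gives a structured $\tg = \Gamma(e(P_1),\dots,e(P_C))$ with $\Unorm{f-\tg}{r}\le \epsilon/2$. Replacing each $P_i$ by its rounded version $P_i'$ changes $e(P_i)$ pointwise by a small amount (in $\ell_\infty$, hence trivially in $U^r$ by the monotonicity $\Unorm{\cdot}{r}\le\norm{\cdot}_\infty$), and replacing $\Gamma$ by its rounded version $\Gamma'$ on the grid likewise perturbs the output pointwise; a short telescoping argument using the triangle inequality for $\Unorm{\cdot}{r}$ (valid since $r\ge 2$) and the monotonicity bound gives $\Unorm{\tg - \Gamma'(e(P_1'),\dots,e(P_C'))}{r}\le\epsilon/2$, so the rounded structured function is within $\epsilon$ of $f$.

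The main obstacle I anticipate is purely bookkeeping: making the decomposition step precise enough to control both the error contributed by rounding the polynomial coefficients and the Lipschitz-type dependence of the structured function on $\Gamma$, so that all rounding errors can be absorbed into a single $\epsilon/2$ while keeping $C$ and the grid size independent of $n$. The polynomial count $O_{\K,r}(n^{r-1})$ itself is immediate from~\eqref{eqn-frm}; the subtlety is only in confirming that everything else in the construction costs at most an $n$-independent constant in the exponent. One natural alternative, if one wants to avoid a separate ``structured $\Gamma$'' discretization, is to instead enumerate tuples of characters $e(\langle \cdot, P_i\rangle)$ directly and use that any bounded function's $U^r$-behavior is determined by its correlations with bounded-complexity polynomial factors; but the factor-based route above is the cleanest and matches the outline in the introduction.
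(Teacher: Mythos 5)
Your overall strategy matches the paper's: apply the decomposition theorem (Lemma~\ref{lem-decomposition}) to reduce to structured functions $\tau(P_1,\dots,P_k)$ with $k=k(\epsilon,\K,r)$, count degree-$<r$ non-classical polynomials via the explicit form~\eqref{eqn-frm} (your count $\exp(O_{\K,r}(n^{r-1}))$ agrees with the paper's $\binom{nt+r-1}{r-1}r$ coefficients in $\{0,\dots,p-1\}$), and discretize the outer function. However, there is a genuine gap in how you verify that the discretized family is a net, and it is not ``purely bookkeeping.'' You propose to round the polynomials (their constant terms $\theta$) and to replace $\Gamma$ by a version defined on a finite grid in $\T^C$, arguing that these perturbations are controlled pointwise and hence in $\Unorm{\cdot}{r}$ by monotonicity. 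But the decomposition theorem only guarantees that $\Gamma$ (the paper's $\tau$) is \emph{bounded} --- it is a conditional-expectation-type function with no continuity or Lipschitz property whatsoever. Perturbing its \emph{arguments}, whether by rounding $\theta$ or by snapping $(e(P_1(x)),\dots,e(P_C(x)))$ to grid points, can change the value of $\Gamma$ by up to the full diameter of the unit disc at every $x$, so your telescoping/$\ell_\infty$ argument does not go through for that step. Only perturbing the \emph{output} values of $\Gamma$ is an honest $\ell_\infty$ perturbation.

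The paper avoids this entirely by exploiting a structural fact you did not use: by Equation~\eqref{eqn-frm}, a non-classical polynomial of degree $<r$ with zero constant term (and Lemma~\ref{lem-decomposition} supplies $P_i(\bar 0)=0$, so there is no $\theta$ to round) takes values in the \emph{finite} subgroup $\frac{1}{p^r}\Z/\Z$ of $\T$. Hence the polynomials are enumerated exactly, with no rounding, and the effective domain of $\tau$ is the finite set $(\frac{1}{p^r}\Z/\Z)^k$ of size $p^{rk}$ independent of $n$; the only discretization needed is of $\tau$'s output values on the $\epsilon$-lattice, contributing a factor $(4/\epsilon^2)^{p^{rk}}$, and that replacement is controlled by $\norm{\tau-\tilde\tau}_\infty$ together with $\Unorm{\cdot}{r}\le\norm{\cdot}_\infty$. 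To repair your argument you should either invoke this finite-range property (the clean fix, and the paper's route) or prove a strengthened decomposition theorem producing a Lipschitz outer function, which is additional work not provided by Lemma~\ref{lem-decomposition} as stated.
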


For the proof, we need the following definitions.
\begin{definition}[Polynomial factors]
A polynomial factor $\cB$ is a sequence of non-classical polynomials $P_1, . . . , P_k: \K^n \to \T$. We also identify it with the function $\cB : \K^n\to \T^k$ mapping $x\mapsto (P_1(x), . . . , P_k(x))$.  The partition induced by $\cB$ is the partition of $\K^n$ given by $\{\cB^{-1}(y):y \in \T^k\}$. The complexity of $\cB$ is the number of defining polynomials, $|\cB| = k$. The degree of $\cB$ is the maximum degree among its defining polynomials $P_1,\cdots, P_k$.
A function $f:\K^n\to \C$ is called $\cB$-measurable if it is constant in each cell of the partition induced by $\cB$ or equivalently $f$ can be written as a $\tau(P_1,\cdots,P_k)$ for some function $\tau:\T^k\to \C$.
\end{definition}

\begin{definition}[Conditional expectations]
Given a polynomial factor $\cB$, the conditional expectation of $f:\K^n\to \C$ over $\cB$, denoted by $\E[f|\cB]$, is the $\cB$-measurable function defined by $$\E[f|\cB](x)=\E_{y\in \cB^{-1}\left(\cB(x)\right)}[f(y)].$$
\end{definition}

\begin{definition}[Factor refinement]
Given two polynomial factors $\cB,\cB'$, we say $\cB'$ is a refinement of $\cB$, denoted by $\cB'\preceq\cB$, if every cell in the partition induced by $\cB'$ is contained in some cell in the partition induced by $\cB$.
\end{definition}
The definition of refinement immediately implies:
\begin{lemma}[Pythagoras theorem]
\label{lem-pythagoras}
Let $\cB,\cB'$ be polynomial factors such that $\cB'\preceq\cB$, then for any function $f:\K^n\to \C$,
$$\norm{\E[f|\cB']}_2^2=\norm{\E[f|\cB]}_2^2+\norm{\E[f|\cB']-\E[f|\cB]}_2^2.$$
\end{lemma}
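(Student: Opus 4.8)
The plan is to recognize this as the Pythagorean theorem for orthogonal projections in the Hilbert space $V = \{f:\K^n\to\C\}$ equipped with the inner product $\inpro{f}{g}=\E_x[\overline{f(x)}g(x)]$. For a polynomial factor $\cB$, let $V_\cB\subseteq V$ denote the subspace of $\cB$-measurable functions, i.e.\ functions constant on each cell of the partition induced by $\cB$. The first observation I would record is that $\cB'\preceq\cB$ forces $V_\cB\subseteq V_{\cB'}$: by definition of refinement each cell of $\cB'$ lies inside a single cell of $\cB$, so every cell of $\cB$ is a disjoint union of cells of $\cB'$; hence a function constant on cells of $\cB$ is a fortiori constant on cells of $\cB'$. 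In particular $\E[f|\cB]$ is $\cB$-measurable and therefore also $\cB'$-measurable, so that $\E[f|\cB']-\E[f|\cB]$ lies in $V_{\cB'}$ as well.

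The key fact is that $\E[\cdot\,|\,\cB]$ is exactly the orthogonal projection of $V$ onto $V_\cB$. This is immediate from the definition $\E[f|\cB](x)=\E_{y\in\cB^{-1}(\cB(x))}[f(y)]$: for any $h\in V_\cB$, split $\K^n$ into the cells $C$ of $\cB$; on each $C$ the function $h$ is a constant $h_C$ while $\E[f|\cB]$ is the constant $\E_{y\in C}[f(y)]$, so the contribution of $C$ to $\inpro{h}{f-\E[f|\cB]}$ is $\Pr_x[x\in C]\cdot\overline{h_C}\cdot\big(\E_{y\in C}[f(y)]-\E_{y\in C}[f(y)]\big)=0$; summing over cells gives $\inpro{h}{f-\E[f|\cB]}=0$, i.e.\ $f-\E[f|\cB]\perp V_\cB$. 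Running the same computation for the factor $\cB'$ yields $\inpro{h'}{f-\E[f|\cB']}=0$ for every $h'\in V_{\cB'}$.

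With these two orthogonality relations in hand the identity follows by a direct expansion. Writing $\E[f|\cB']=\E[f|\cB]+g$ with $g:=\E[f|\cB']-\E[f|\cB]$ gives
\[
\norm{\E[f|\cB']}_2^2=\norm{\E[f|\cB]}_2^2+\norm{g}_2^2+2\,\mathrm{Re}\,\inpro{\E[f|\cB]}{g},
\]
so it only remains to check the cross term vanishes. Since $\E[f|\cB]$ is both $\cB$-measurable and (by the first paragraph) $\cB'$-measurable, the projection property gives $\inpro{\E[f|\cB]}{\E[f|\cB']}=\inpro{\E[f|\cB]}{f}$ (take $h'=\E[f|\cB]$ in the $\cB'$-relation) and $\norm{\E[f|\cB]}_2^2=\inpro{\E[f|\cB]}{\E[f|\cB]}=\inpro{\E[f|\cB]}{f}$ (take $h=\E[f|\cB]$ in the $\cB$-relation); subtracting, $\inpro{\E[f|\cB]}{g}=0$. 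There is essentially no obstacle here; the only point requiring care is using the refinement hypothesis in the correct direction — that cells of $\cB'$ are the \emph{smaller} ones, so $V_\cB\subseteq V_{\cB'}$ and not the reverse — since this is precisely what places $\E[f|\cB]$ in the range of the projection defining $\E[f|\cB']$ and makes the cross term collapse.
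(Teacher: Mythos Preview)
Your proof is correct and follows essentially the same route as the paper's: both reduce the identity to showing the cross term $\inpro{\E[f|\cB]}{\E[f|\cB']-\E[f|\cB]}$ vanishes, which the paper derives via the tower property $\E\big[\E[f|\cB']\,\big|\,\cB\big]=\E[f|\cB]$ and you derive via the equivalent orthogonal-projection characterization of conditional expectation. Your explicit verification that $\cB'\preceq\cB$ gives $V_\cB\subseteq V_{\cB'}$ (and not the reverse) is exactly the point the paper is invoking when it says the lemma follows ``immediately'' from the definition of refinement.
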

\ignore{
\begin{proof}
It is enough to show that $\inpro{\E[f|\cB]}{\E[f|\cB']-\E[f|\cB]}=0$.
\begin{align*}
\inpro{\E[f|\cB]}{\E[f|\cB']-\E[f|\cB]}&=\E\Big[\E[f|\cB]\E[f|\cB']-\E[f|\cB]^2\Big]\\
&=\E\bigg[\E\Big[\E[f|\cB]\E[f|\cB']\Big|\cB\Big]\bigg]-\E\Big[\E[f|\cB]^2\Big]\\
&=\E\bigg[\E[f|\cB]\E\Big[\E[f|\cB']\Big|\cB\Big]\bigg]-\E\Big[\E[f|\cB]^2\Big]=0
\end{align*}
where in the last step we used the fact that $\cB'\preceq\cB\Rightarrow \E\Big[\E[f|\cB']\Big|\cB\Big]=\E[f|\cB]$.
\end{proof}}
The next claim shows that any bounded function is ``close'' to being measurable by a polynomial factor of bounded complexity. Precisely:
\begin{lemma}[Decomposition Theorem]
\label{lem-decomposition}
Any bounded $f:\K^n\to \C$ can be approximated in $\Unorm{\cdot}{r}$ by a function of a small number of degree $<r$ non-classical polynomials i.e. for any $\epsilon>0$, there exists non-classical polynomials $P_1,P_2,\cdots,P_k$ of degree $< r$ with $P_i(\bar{0})=0\ \forall i$ and a bounded function $\tau:\T^{k}\to \C$ such that
$$\Unorm{f-\tau(P_1,P_2,\cdots,P_k)}{r}\le \epsilon$$ where $k=k(\epsilon,\K,r)$ is a constant depending only on $\epsilon,\K,r$. 
\end{lemma}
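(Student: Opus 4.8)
The plan is to prove the Decomposition Theorem by the standard energy-increment (Szemerédi-regularity-style) argument, iterating the Inverse Gowers theorem (Lemma~\ref{lem-inversegowers}). First I would set up the iteration: start with the trivial polynomial factor $\cB_0$ (empty sequence of polynomials), and maintain a factor $\cB_i$ of non-classical polynomials of degree $<r$, all vanishing at $\bar 0$. At stage $i$, consider the function $g_i := f - \E[f\mid\cB_i]$. If $\Unorm{g_i}{r}\le\epsilon$ we stop and output $\tau(P_1,\dots,P_k) := \E[f\mid\cB_i]$, which is $\cB_i$-measurable and hence of the required form (and bounded since $f$ is bounded and conditional expectation is an averaging operator). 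Otherwise $\Unorm{g_i}{r}>\epsilon$, and since $g_i$ is bounded by $2$ (so $g_i/2$ is bounded), Lemma~\ref{lem-inversegowers} yields a non-classical polynomial $Q_i$ of degree $<r$ with $|\inpro{g_i}{e(Q_i)}|\ge c(\epsilon/2,\K,r) =: \delta$. Replacing $Q_i$ by $Q_i - Q_i(\bar 0)$ (a constant phase shift, which does not change $|\inpro{g_i}{e(Q_i)}|$) we may assume $Q_i(\bar 0)=0$. We then refine: $\cB_{i+1}$ is $\cB_i$ with $Q_i$ appended, so $\cB_{i+1}\preceq\cB_i$.

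The key energy-increment step is to show $\norm{\E[f\mid\cB_{i+1}]}_2^2 \ge \norm{\E[f\mid\cB_i]}_2^2 + \delta^2$. By the Pythagoras theorem (Lemma~\ref{lem-pythagoras}), it suffices to show $\norm{\E[f\mid\cB_{i+1}] - \E[f\mid\cB_i]}_2^2 \ge \delta^2$. For this, note that $e(Q_i)$ is $\cB_{i+1}$-measurable, so $\inpro{g_i}{e(Q_i)} = \inpro{f - \E[f\mid\cB_i]}{e(Q_i)} = \inpro{\E[f\mid\cB_{i+1}] - \E[f\mid\cB_i]}{e(Q_i)}$, using that conditional expectation is self-adjoint and that $\E[\E[f\mid\cB_i]\mid\cB_{i+1}] = \E[f\mid\cB_i]$ since $\cB_{i+1}\preceq\cB_i$. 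Then Cauchy–Schwarz together with $\norm{e(Q_i)}_2 = 1$ gives $\norm{\E[f\mid\cB_{i+1}] - \E[f\mid\cB_i]}_2 \ge |\inpro{g_i}{e(Q_i)}| \ge \delta$, as desired. Since $\norm{\E[f\mid\cB_i]}_2^2 \le \norm{f}_2^2 \le 1$ throughout, the process must terminate after at most $1/\delta^2$ steps, so the final factor has complexity $k \le 1/\delta^2 = 1/c(\epsilon/2,\K,r)^2$, a constant depending only on $\epsilon, \K, r$.

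Finally, I would observe that the stopping condition guarantees the output $\tau(P_1,\dots,P_k)=\E[f\mid\cB_k]$ satisfies $\Unorm{f - \tau(P_1,\dots,P_k)}{r} = \Unorm{g_k}{r}\le\epsilon$, and we have noted $\tau$ is bounded and the $P_i$ vanish at $\bar 0$ and have degree $<r$. This completes the proof modulo two routine points: that $\E[f\mid\cB]$ is bounded whenever $f$ is (immediate, averaging), and the identity $\E[\E[f\mid\cB]\mid\cB'] = \E[f\mid\cB]$ when $\cB'\preceq\cB$ (each cell of $\cB$ is a union of cells of $\cB'$, so averaging over a $\cB'$-cell of a function already constant on $\cB$-cells returns the same value). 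The main conceptual obstacle — and the one subtlety worth stating carefully — is the energy-increment/Pythagoras bookkeeping: one must be careful that the polynomial produced by the inverse theorem is appended to the \emph{current} factor (so that $e(Q_i)$ becomes measurable and the correlation transfers cleanly to the increment $\E[f\mid\cB_{i+1}]-\E[f\mid\cB_i]$), and that the normalization $g_i/2$ only costs a constant factor in $\delta$. Everything else is the standard regularity-lemma template.
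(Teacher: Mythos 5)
Your proof is correct and follows essentially the same energy-increment argument as the paper: iterate the Inverse Gowers theorem (Lemma~\ref{lem-inversegowers}), refine the factor by the correlating polynomial, and use Pythagoras (Lemma~\ref{lem-pythagoras}) to bound the number of iterations by $1/\delta^2$. The only (harmless) differences are cosmetic: you bound the increment via Cauchy--Schwarz against $e(Q_i)$ rather than the paper's $L^1$--$L^2$ step, and you are slightly more careful than the paper in normalizing $g_i$ (which is only bounded by $2$) before invoking the inverse theorem.
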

\begin{proof}
The proof is similar to the proof of the Quadratic Koopman-von Neumann decompostion which is Prop 3.7 in~\cite{Green06} but using the full Inverse Gowers Theorem (Lemma~\ref{lem-inversegowers}) and similar claims are implicit elsewhere, but for completeness, we give the proof.

The main idea is to approximate the function $f$ using its conditional expectation over a suitable polynomial factor $\cB$ of degree $<r$. We will start with the trivial factor $\cB_0=(1)$ and iteratively construct more refined partitions $\cB_i\preceq \cB_{i-1}$ until we find a factor $\cB_k$ which satisfies $\Unorm{f-\E[f|\cB_k]}{r}\le \epsilon$. To bound the number of iterations needed to achieve this, we will show that the energy $\norm{\E[f|\cB_i]}_2^2$ which is bounded above by 1, increases by a fixed constant in every step.\\
Suppose that after step $i-1$, we still have $\Unorm{f-\E[f|\cB_{i-1}]}{r}>\epsilon$. Let $g=f-\E[f|\cB_{i-1}]$, then by the inverse Gowers theorem (Lemma~\ref{lem-inversegowers}), we have some non-classical polynomial $P_i$ of degree $<r$ such that $|\inpro{g}{e(P_i)}|\ge \kappa=c(\epsilon,p,r)$. We can assume that $P_i(\bar{0})=0$. Refine the factor $\cB_{i-1}$ by adding the polynomial $P_i$ to obtain $\cB_i\preceq \cB_{i-1}$. Now consider the energy increment,
$$\norm{\E[f|\cB_i]}_2^2-\norm{\E[f|\cB_{i-1}]}_2^2=\norm{\E[f|\cB_i]-\E[f|\cB_{i-1}]}_2^2=\norm{\E[g|\cB_i]}_2^2$$
where we used the Pythagoras theorem(Lemma~\ref{lem-pythagoras}) and the fact that $\E\big[\E[f|\cB_{i-1}]\big|\cB_i\big]=\E[f|\cB_{i-1}]$ since $\cB_i\preceq\cB_{i-1}$. So
\begin{align*}
\kappa^2 &\le |\E[g\cdot e(P_i)]|^2 = \left|\E\big[\E[g\cdot e(P_i)|\cB_i]\big]\right|^2=\left|\E\big[e(P_i)\E[g|\cB_i]\big]\right|^2\\
&\le \norm{\E[g|\cB_i]}_1^2\le\norm{\E[g|\cB_i]}_2^2 =\norm{\E[f|\cB_i]}_2^2-\norm{\E[f|\cB_{i-1}]}_2^2.
\end{align*}
Thus the energy increases by $\kappa^2$ every step. But since the energy is bounded above by 1, the process should end in a finite number of steps $k\le \frac{1}{\kappa^2}$. So $\Unorm{f-\E[f|\cB_k]}{r}\le \epsilon$, but since $\E[f|\cB_k]$ is $\cB_k$-measurable, we can write $\E[f|\cB_k]=\tau(P_1,\cdots,P_k)$ for some function $\tau$ with $|\tau|=|\E[f|\cB_k]|\le |f|\le 1$.
\end{proof}

We are now ready to prove Theorem \ref{lem-epsilonnet}.
\begin{proof}[Proof of Theorem \ref{lem-epsilonnet}]
Recall that $\K$ is an extension field of dimension $t$ over a prime field $\F_p$. The $\epsilon$-net will be the set $\cN$ of all functions of the form $\tau(P_1,\cdots,P_k)$ where $P_1,\cdots,P_k$ are degree $< r$ non-classical polynomials with zero constant terms, $\tau:\T^k\to \C$ is a bounded function and $k=k(\epsilon,p,r)$ is the constant given by Lemma~\ref{lem-decomposition}. But we will not include all possible bounded $\tau:\T^k\to \C$. Firstly by Equation \ref{eqn-frm}, $P_1,\cdots,P_k$ take values only in $\frac{1}{p^r}\Z/\Z$. Next we will discretize the set $\{z\in \C:|z|\le 1\}$ into the $\epsilon$-lattice i.e. we will only consider maps $\tau: (\frac{1}{p^r}\Z/\Z)^k \to \{z\in \C: |z|\le 1\}\cap \epsilon(\Z+i\Z)$. The number of such maps is bounded by $(4/\epsilon^2)^{p^{rk}}$.

By Equation \ref{eqn-frm}, a non-classical polynomial of degree $< r$ in $n$ variables with zero constant term can be represented by $\le \binom{nt+r-1}{r-1}r$ coefficients in $\{0,1,\cdots,p-1\}$. So the number of such non-classical polynomials is bounded by $\exp\left(O_{r,\K}(n^{r-1})\right)$. Combining both the bounds, $$|\cN|\le \exp\left(O_{r,\K}(n^{r-1})\right)^k \cdot (4/\epsilon^2)^{p^{rk}}=\exp\left(O_{\epsilon,\K,r}(n^{r-1})\right).$$

We will now prove that $\cN$ is a $3\epsilon$-net. Given any $f:\K^n\to [-1,1]$, using Lemma~\ref{lem-decomposition}, there is a function $\tau(P_1,\cdots,P_k)$ such that $$\Unorm{f-\tau(P_1,P_2,\cdots,P_k)}{r}\le \epsilon.$$
If we consider the $\tilde\tau\in \cN$ by rounding values real and imaginary parts of $\tau$ to the nearest multiple of $\epsilon$, we get 
\begin{align*}
\Unorm{f-\tilde\tau(P_1,P_2,\cdots,P_k)}{r}
&\le \Unorm{f-\tau(P_1,P_2,\cdots,P_k)}{r} + \Unorm{\tau(P_1,P_2,\cdots,P_k)-\tilde\tau(P_1,P_2,\cdots,P_k)}{r}\\
&\le \epsilon + \norm{\tau(P_1,P_2,\cdots,P_k)-\tilde\tau(P_1,P_2,\cdots,P_k)}_{\infty}\le 3\epsilon.
\end{align*}
\end{proof}


\section{Locally Correctable Codes}\label{sec:lcc}

We begin by defining locally correctable codes formally. Note that the definition below differs from the conventional one in terms of a local correction algorithm and adversarial errors (see, for instance, \cite{Yek11}); however, our definition is certainly weaker. Therefore, this makes our lower bounds  stronger.

\begin{definition}[Locally Correctable Code\ (LCC)]
An $(r,\delta,\tau)$ LCC is a code $\cC\subset \Sigma^\cX$ with the following property:\\
For each $x\in \cX$ there is a distribution $\cM_x$ over $r$-tuples of distinct\footnote{WLOG we can assume the tuples have distinct coordinates by adding dummy coordinates and modifying the decoding functions $\cD_{x,y_1,\cdots,y_r}$} coordinates  such that whenever $\tf\in \Sigma^\cX$ is $\delta$-close to some codeword $f\in\cC$ in Hamming distance,
$$\Pr_{(y_1,\cdots,y_r)\sim \cM_x}[\cD_{x,y_1,\cdots,y_r} (\tf(y_1),\tf(y_2),\cdots,\tf(y_r))=f(x)]\ge 1-\tau$$ where $\cD_{x,y_1,\cdots,y_r}:\Sigma^r\to \Sigma$, called the decoding operator, depends only on $x,y_1,\cdots,y_r$.\\
If furthermore $\cX$ is a vector space and $\cC$ is affine invariant then we call it an affine invariant LCC.
\end{definition}

\begin{remark}
Let $|\Sigma|=m$, WLOG we can assume that $\Sigma=\{1,2,\cdots,m\}$. Then we can extend functions $f:\cX\to \Sigma$ to $\hf:\cX\to \simplex_m$. The decoding operators $\cD:\Sigma^r\to\Sigma$ can also be extended to $\hD:\simplex_m^r\to\simplex_m$ as follows: For $z_1,\cdots,z_r\in \simplex_m$ define $$\hD(z_1,\cdots,z_r)=\sum_{1\le \ell_1,\cdots,\ell_r\le m} e_{\cD(\ell_1,\cdots,\ell_r)} (z_1)_{\ell_1}\cdots (z_r)_{\ell_r}$$ where $e_j$ stands for the $j^{th}$ coordinate vector in $\R^m$ and $(z_j)_\ell$ is the $\ell^{th}$ coordinate of the vector $z_j$. Now we can rewrite the decoding condition as:
$$\E_{(y_1,\cdots,y_r)\sim \cM_x}[\inpro{\hf(x)}{\hD_{x,y_1,\cdots,y_r} (\hf(y_1),\hf(y_2),\cdots,\hf(y_r))}]\ge 1-\tau.$$
\end{remark}

First, we make the observation that any LCC must have good minimum distance.
\begin{lemma}
\label{lem-distance}
Let $\cC\subset \Sigma^\cX$ be an $(r,\delta,\tau)$ LCC with $\tau<1/2$, then the minimum distance of $\cC$ is at least $2\delta$.
\end{lemma}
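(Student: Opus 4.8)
The plan is to show that if two distinct codewords $f, g \in \cC$ were at Hamming distance less than $2\delta$, then the local corrector could not succeed on both, contradicting the defining property of the LCC. Concretely, suppose for contradiction that $\Delta(f,g) < 2\delta$ for some distinct $f, g \in \cC$. The idea is to build a single received word $\tf$ that is simultaneously $\delta$-close to $f$ and $\delta$-close to $g$, so the corrector must (with probability $> 1/2$ each) output both $f(x)$ and $g(x)$ at every coordinate $x$ where they disagree — an impossibility.

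First I would fix a coordinate $x \in \cX$ with $f(x) \neq g(x)$; such an $x$ exists since $f \neq g$. Next I would construct $\tf$ by starting from $f$ and changing its value to $g$'s value on roughly half of the coordinates where $f$ and $g$ disagree — more precisely, split the disagreement set $D = \{y : f(y) \neq g(y)\}$ into two parts $D_f, D_g$ of size as equal as possible, set $\tf = f$ on $D_g$ and everywhere outside $D$, and $\tf = g$ on $D_f$. Then $\Delta(\tf, f) = |D_f|/|\cX| \le |D|/(2|\cX|) < \delta$ and similarly $\Delta(\tf, g) \le |D|/(2|\cX|) < \delta$ (being a touch careful about the off-by-one when $|D|$ is odd, which only costs $1/|\cX|$ and is absorbed by the strict inequality, or one can just note $\Delta(\tf,f) \le \Delta(f,g) < 2\delta$ actually needs the halving — so I'd keep the even/odd split). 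Since $\tf$ is $\delta$-close to the codeword $f$, the LCC property gives $\Pr_{(y_1,\dots,y_r)\sim\cM_x}[\cD_{x,\vec y}(\tf(y_1),\dots,\tf(y_r)) = f(x)] \ge 1 - \tau > 1/2$; since $\tf$ is also $\delta$-close to the codeword $g$, the same property gives $\Pr_{(y_1,\dots,y_r)\sim\cM_x}[\cD_{x,\vec y}(\tf(y_1),\dots,\tf(y_r)) = g(x)] \ge 1 - \tau > 1/2$. But the two events are disjoint because $f(x) \neq g(x)$, so their probabilities sum to at most $1$, contradicting $2(1-\tau) > 1$. Hence $\Delta(f,g) \ge 2\delta$ for all distinct $f, g$, i.e. the minimum distance is at least $2\delta$.

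The only mild subtlety — really the ``main obstacle,'' though it is minor — is making sure the single received word $\tf$ is genuinely within distance $\delta$ of \emph{both} codewords simultaneously, which is exactly why we need the slack factor of $2$ in the hypothesis $\Delta(f,g) < 2\delta$; the disagreement mass $\Delta(f,g)$ gets split evenly, so each codeword sees at most $\Delta(f,g)/2 < \delta$ corruptions. Everything else is just the union-bound observation that a decoder outputting a fixed symbol cannot output two different symbols with probability exceeding $1/2$ each. No higher-order Fourier machinery is needed here; this lemma is purely combinatorial and serves only to record that LCCs have constant relative distance, which is used later to argue codewords are far apart in Gowers norm.
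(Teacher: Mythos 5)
Your proof is correct and is essentially the paper's own argument: the paper also takes the ``midpoint'' word $h$ that is $\delta$-close to both $f$ and $g$, fixes a coordinate $x$ with $f(x)\ne g(x)$, and derives the contradiction that the corrector outputs two distinct symbols each with probability at least $1-\tau>1/2$. Your explicit construction of the midpoint (splitting the disagreement set) just spells out what the paper leaves implicit.
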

\begin{proof}
Let $f,g\in \cC$ be two distinct codewords such that $\Delta(f,g)<2\delta$. Let $h$ be the midpoint of $f$ and $g$ i.e. $h$ is $\delta$-close to both $f$ and $g$. Let $x\in\cX$ be such that $f(x)\ne g(x)$. By the LCC property, $$\Pr_{(y_1,\cdots,y_r)\sim \cM_x}[f(x)=\cD_{x,y_1,\cdots,y_r}(h(y_1),\cdots,h(y_r))]\ge 1-\tau$$
$$\Pr_{(y_1,\cdots,y_r)\sim \cM_x}[g(x)=\cD_{x,y_1,\cdots,y_r}(h(y_1),\cdots,h(y_r))]\ge 1-\tau.$$ This is a contradiction when $\tau<\frac{1}{2}$. Therefore every two codewords must be at least $2\delta$ apart.
\end{proof}

Now, we are ready to prove our main result of this section.
\begin{theorem}[Lower bound for LCCs]
\label{thm-main-LCC}
Let $\cC\subset \Sigma^{\K^n}$ be an $(r,\delta,\tau)$ affine-invariant LCC where $\tau< \frac{2\delta}{3}$. Then $|\cC|\le\exp\left(O_{\delta,\K,r,|\Sigma|}(n^{r-1})\right)$.
\end{theorem}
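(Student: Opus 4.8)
The plan is to show that any two codewords of an $(r,\delta,\tau)$ affine-invariant LCC are separated by a constant (depending only on $\delta,\K,r,|\Sigma|$) in the $U^r$-norm after passing to the simplex extensions, and then conclude by counting against the $\epsilon$-net of Theorem~\ref{lem-epsilonnet}. Concretely, I would first record that by Lemma~\ref{lem-distance} distinct codewords $f,g$ have $\Delta(f,g)\ge 2\delta$, so $\Pr_x[f(x)\ne g(x)]\ge 2\delta$, i.e. $\E_x\inpro{\hf(x)}{\hg(x)}\le 1-2\delta$. The goal is to turn a hypothetical small Gowers distance $\Unorm{\hf-\hg}{r}<\epsilon$ (coordinatewise, say for each of the $|\Sigma|$ simplex coordinates) into a contradiction with the decoding guarantee.

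The heart of the argument is the following substitution: fix $x\in\K^n$ with $f(x)\ne g(x)$; the decoder $\cM_x$ queries an $r$-tuple $(y_1,\dots,y_r)$ which, by affine invariance, may be taken to be $r$ generic linear forms $\cL_1,\dots,\cL_r$ in auxiliary variables (the key point exploited here is that affine-invariance forces the query distribution to be, up to averaging, supported on affine images of a fixed configuration, so that the $y_i=\cL_i(z_1,\dots,z_m)$ are governed by pairwise-linearly-independent forms). Then I would write the ``correctness against $g$'' quantity
$$\E_{(y_1,\dots,y_r)\sim\cM_x}\inpro{\hg(x)}{\hD_{x,y_1,\dots,y_r}(\hf(y_1),\dots,\hf(y_r))}$$
and expand $\hD$ using the multilinear formula in the Remark. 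Replacing $\hf(y_i)$ by $\hg(y_i)+(\hf-\hg)(y_i)$ and multilinearly expanding, the ``all-$\hg$'' term equals the true decoding expectation for $g$, which is $\ge 1-\tau$; every other term contains at least one factor of $(\hf-\hg)$ evaluated at one of the $\cL_i$'s, and applying Lemma~\ref{lem-gowerscs} (generalized von Neumann, with the $\cL_i$'s pairwise non-proportional) each such term is bounded in absolute value by $\Unorm{(\hf-\hg)_{\ell}}{r}< \epsilon$ times a bounded constant. Summing the $O_{|\Sigma|,r}(1)$ error terms, we get that the decoder applied to $\hf$ outputs (in the simplex-inner-product sense) something $\ge 1-\tau - O_{|\Sigma|,r}(\epsilon)$ correlated with $\hg(x)$. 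But the genuine decoding guarantee for $\hf$ says this same quantity is $\ge 1-\tau$ correlated with $\hf(x)$, and since $\inpro{\hf(x)}{\hg(x)}=0$ at a disagreement point $x$ and the decoder output lies in $\simplex_{|\Sigma|}$, these two are incompatible once $\tau + O_{|\Sigma|,r}(\epsilon) < $ (something like) $\tfrac12$; averaging the incompatibility over the $\ge 2\delta$ fraction of disagreement $x$ and using $\tau<2\delta/3$ yields the contradiction, fixing the threshold $\epsilon=\epsilon(\delta,\K,r,|\Sigma|)$.

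Having established that the simplex extensions $\{\hf : f\in\cC\}$ form an $\epsilon$-separated set in the (product over $|\Sigma|$ coordinates of the) $U^r$-metric, I would invoke Theorem~\ref{lem-epsilonnet}: a metric space with an $(\epsilon/2)$-net of size $M$ contains at most $M$ points that are pairwise $\epsilon$-separated. Applying this in each of the $|\Sigma|$ coordinates (or to the product metric, whose net has size at most the $|\Sigma|$-th power of the single-coordinate net), $|\cC| \le \exp(O_{\epsilon,\K,r}(n^{r-1}))^{|\Sigma|} = \exp(O_{\delta,\K,r,|\Sigma|}(n^{r-1}))$, which is the claimed bound.

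The step I expect to be the main obstacle is making precise the reduction ``decoder query distribution $\implies$ pairwise-linearly-independent linear forms.'' The subtlety is that $\cM_x$ is an arbitrary distribution over $r$-tuples of \emph{distinct} coordinates, not obviously an affine-invariant configuration; one has to use affine-invariance of $\cC$ (so that the same decoder works for every affine image of every codeword) together with averaging over the affine group to symmetrize the queries into a form amenable to Lemma~\ref{lem-gowerscs}, while handling degenerate tuples (e.g. where some $y_i - y_j$ is forced, or where after the affine change of variables two forms become proportional) — these must either be shown to occur with small probability or absorbed into the error term. Getting the quantitative dependence of $\epsilon$ on $\delta$ right, so that the hypothesis $\tau<2\delta/3$ is exactly what is needed, will require care in bookkeeping the $O_{|\Sigma|,r}(1)$ number of von-Neumann error terms.
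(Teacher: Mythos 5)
Your high-level plan (net for the $U^r$ metric, plus a ``codewords are $U^r$-separated'' claim proved via affine averaging and the generalized von Neumann inequality) is the same as the paper's, but the step you yourself flag as the main obstacle is not a bookkeeping issue --- it is a genuine gap, and in the form you set it up it cannot be repaired. By fixing a disagreement point $x$ and demanding that the decoded position stay at $x$, the only symmetrization available is over affine maps whose translation is pinned (those sending the decoded coordinate to $x$). This randomizes only the linear part, so after the change of variables the query points have the form $x+\sum_j \mu_{ij}z_j$, and nothing forces the forms $\sum_j\mu_{ij}z_j$ to be pairwise non-proportional: the canonical example is a corrector that queries $r$ points on a random line through the point being corrected (Reed--Muller style), where \emph{all} the forms are $t_1z,\dots,t_rz$, i.e.\ mutually proportional with probability $1$, so this is not a ``degenerate tuple of small probability'' that can be absorbed into the error term. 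For proportional forms Lemma~\ref{lem-gowerscs} is simply inapplicable, and the cross-term bound you want is false in that generality: a bounded function can have negligible $U^r$ norm while a product of bounded functions evaluated along proportional forms has expectation $\Omega(1)$ (already $\E_z[u(x+z)^2]$ can be $1$ while $\Unorm{u}{r}$ is tiny). So the hybrid expansion of $\hD(\hf(y_1),\dots,\hf(y_r))$ at a fixed $x$ does not give the claimed $1-\tau-O_{m,r}(\epsilon)$ correlation with $\hg(x)$.

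The paper escapes this by not conditioning on a disagreement point at all: it averages over a uniformly random decoded position $y_0$ \emph{and} the full affine group including translations, so every query (and $y_0$ itself) becomes a form $z_0+\sum_j\lambda_{ij}z_j$ in which the translation variable $z_0$ appears with coefficient $1$; distinct coordinates then automatically give pairwise non-proportional forms. Moreover the difference is placed once, at the output position (the quantity $\inpro{(\hf-\hg)\circ\ell(y_0)}{\hD(\hf\circ\ell(y_1),\dots,\hf\circ\ell(y_r))}$), so a single application of Lemma~\ref{lem-gowerscs} per multilinear term suffices and no hybrid over the query slots is needed. The endgame is correspondingly different from yours: instead of an orthogonality contradiction at disagreement points (your ``$\tau+O(\epsilon)<\tfrac12$'' threshold), one compares the global probabilities $\Pr[\text{output}=f\circ\ell(y_0)]$ and $\Pr[\text{output}=g\circ\ell(y_0)]$ and invokes the minimum-distance bound $\Pr_x[f(x)=g(x)]\le 1-2\delta$ from Lemma~\ref{lem-distance}; this is exactly where the hypothesis $\tau<\tfrac{2\delta}{3}$ (with $\epsilon=\tfrac{2\delta}{3m^r}$) is consumed. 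Your counting step via the net is fine, but without the randomization of the decoded position the separation claim --- the heart of the theorem --- is not established.
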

\begin{proof}
Let $|\Sigma|=m$. Let $\cN$ be an $\epsilon/2$-net for the space of all bounded functions $\{f:\K^n\to\C\}$ with the metric induced by $\Unorm{\cdot}{r}$-norm where $\epsilon=\frac{2\delta}{3m^r}$. Given a bounded $f:\K^n\to\C$, define $$\phi(f):=\argmin_{h\in \cN}\Unorm{f-h}{r}$$ (break ties arbitrarily). Since $\cN$ is an $\epsilon/2$ net, we have $\Unorm{f-\phi(f)}{r}\le \epsilon/2$. Define $\Psi:\cC\to \cN^m$ as $$\Psi(f):=(\phi(\hf_1),\cdots,\phi(\hf_m))$$ where $\hf_i:\K^n\to \R_{\ge 0}$ is the $i^{th}$ coordinate function of the simplex extension $\hf:\K^n\to \simplex_m$ of $f$. We claim that $\Psi$ is one-one which implies that $|\cC|\le |\cN|^m$. Now using Theorem~\ref{lem-epsilonnet}, the required bound follows. 
Suppose that $\Psi$ is not one-one. Let $f,g\in\cC$ be two distinct codewords such that $\Psi(f)=\Psi(g)$. This implies that $$\forall\ i\in [m]\ \Unorm{\hf_i-\hg_i}{r}\le \Unorm{\hf_i-\phi(\hf_i)}{r}+\Unorm{\hg_i-\phi(\hg_i)}{r} \le \epsilon.$$ 
By affine invariance of $\cC$, $f\circ \ell \in \cC$ for all invertible affine maps $\ell:\K^n \to \K^n$. So by the local correction property, $$\Pr_{\ell,y_0,(y_1,\cdots,y_r)\sim \cM_{y_0}}[f\circ \ell (y_0)=\cD_{y_0,y_1,\cdots,y_r} (f\circ \ell(y_1),\cdots,f\circ \ell(y_r))]\ge 1-\tau$$
where $\ell$ ranges uniformly over all invertible affine maps from $\K^n\to\K^n$ and $y_0$ ranges uniformly over $\K^n$.
Now consider the following difference:
\begin{align*}
&\Pr_{\ell,y_0,(y_1,\cdots,y_r)\sim \cM_{y_0}}[f\circ \ell (y_0)=\cD_{y_0,y_1,\cdots,y_r} (f\circ \ell(y_1),\cdots,f\circ \ell(y_r))]\\
&\hskip 2cm -\Pr_{\ell,y_0,(y_1,\cdots,y_r)\sim \cM_{y_0}}[g\circ \ell (y_0)=\cD_{y_1,\cdots,y_r} (f\circ \ell(y_1),\cdots,f\circ \ell(y_r))]\\
&=\E_{\ell}\E_{y_0}\E_{(y_1,\cdots,y_r)\sim \cM_{y_0}}\left[\inpro{\hf\circ \ell (y_0)}{\hD_{y_0,y_1,\cdots,y_r} (\hf\circ \ell(y_1),\cdots,\hf\circ \ell(y_r))}\right.\\
&\hskip 4cm \left.-\inpro{\hg\circ \ell (y_0)}{\hD_{y_1,\cdots,y_r} (\hf\circ \ell(y_1),\cdots,\hf\circ \ell(y_r))}\right]\\
&= \E_{y_0}\E_{(y_1,\cdots,y_r)\sim \cM_{y_0}}\left[\E_{\ell}\left[\inpro{\hf\circ \ell (y_0)-\hg\circ \ell (y_0)}{\hD_{y_0,y_1,\cdots,y_r} (\hf\circ \ell(y_1),\cdots,\hf\circ \ell(y_r))}\right]\right].
\end{align*}
Now we fix $y_0,y_1,\cdots,y_r$ and show that inner expectation is small for each tuple $(y_0,y_1,\cdots,y_r)$. Let us denote $\cD=\cD_{y_0,y_1,\cdots,y_r}$ for brevity. Let $t=\rk(y_0,y_1,\cdots,y_r)$, thus there exist independent vectors $v_1,\cdots,v_t\in \K^n$ such that for every $0\le i\le r$, $y_i=\sum_{j=1}^t \lambda_{ij}v_j$ for some fixed $\lambda_{ij}\in \K$. The action of a random invertible affine map $\ell$ can be approximated by sampling $z_0,z_1,\cdots,z_t\in \K^n$ uniformly and mapping $y_i\mapsto  z_0+\sum_{j=1}^t \lambda_{ij}z_j$ since with probability $1-o_n(1)$, $z_1,\cdots,z_t$ will be independent. Therefore,
 \begin{align*}
&\E_{\ell}\left[\inpro{\hf\circ \ell (y_0)-\hg\circ \ell (y_0)}{\hD_{y_0,y_1,\cdots,y_r} (\hf\circ \ell(y_1),\cdots,\hf\circ \ell(y_r))}\right]\\
&=_{o_n(1)} \E_{z_0,z_1,\cdots,z_t\in \K^n}\left[\inpro{(\hf-\hg)(z_0+\sum_{j=1}^t \lambda_{0j}z_j)}{\hD\left(\hf(z_0+\sum_{j=1}^t \lambda_{1j}z_j),\cdots,\hf(z_0+\sum_{j=1}^t \lambda_{rj}z_j)\right)}\right]\\
&=\E_{z_0,z_1,\cdots,z_t\in \K^n}\left[\inpro{(\hf-\hg)(z_0+\sum_{j=1}^t \lambda_{0j}z_j)}{\left(\sum_{1\le \ell_1,\cdots,\ell_r\le m}e_{\cD(\ell_1,\cdots,\ell_r)}\prod_{i=1}^r \hf_{\ell_i}(z_0+\sum_{j=1}^t \lambda_{ij}z_j)\right)}\right]\\
&=\E_{z_0,z_1,\cdots,z_t\in \K^n}\left[\left(\sum_{1\le \ell_1,\cdots,\ell_r\le m} (\hf-\hg)_{\cD(\ell_1,\cdots,\ell_r)}(z_0+\sum_{j=1}^t \lambda_{0j}z_j)\cdot \prod_{i=1}^r \hf_{\ell_i}(z_0+\sum_{j=1}^t \lambda_{ij}z_j)\right)\right]\\
&\le\left(\sum_{0\le \ell_1,\cdots,\ell_r\le m-1}\Unorm{(\hf-\hg)_{\cD(\ell_1,\cdots,\ell_r)}}{r}\right) 
\le m^r \epsilon
\end{align*}
 where the first inequality is obtained by applying generalized von Neumann inequality (Lemma~\ref{lem-gowerscs}) to each term. Therefore
\begin{align*}
&\Pr_{\ell,y_0,(y_1,\cdots,y_r)\sim \cM_{y_0}}\left[g\circ \ell (y_0)=\cD_{y_1,\cdots,y_r} (f\circ \ell(y_1),\cdots,f\circ \ell(y_r))\right]\\
&\ge \Pr_{\ell,y_0,(y_1,\cdots,y_r)\sim \cM_{y_0}}\left[f\circ \ell (y_0)=\cD_{y_1,\cdots,y_r} (f\circ \ell(y_1),\cdots,f\circ \ell(y_r))\right] - m^r\epsilon \ge 1-\tau-2\delta/3.
\end{align*}
On the other hand,
\begin{align*}
&\Pr_{\ell,y_0,(y_1,\cdots,y_r)\sim \cM_{y_0}}\left[g\circ \ell (y_0)=\cD_{y_1,\cdots,y_r} (f\circ \ell(y_1),\cdots,f\circ \ell(y_r))\right]\\
&\le \Pr_{\ell,y_0,(y_1,\cdots,y_r)\sim \cM_{y_0}}\left[g\circ \ell (y_0)=f\circ \ell(y_0)\right]+\Pr_{\ell,y_0,(y_1,\cdots,y_r)\sim \cM_{y_0}}\left[f\circ \ell (y_0)\ne\cD_{y_1,\cdots,y_r} (f\circ \ell(y_1),\cdots,f\circ \ell(y_r))\right]\\
&\le \Pr_x[f(x)=g(x)]+\tau  \le 1-2\delta+\tau \tag{By Lemma~\ref{lem-distance}}
\end{align*}
This is a contradiction when $\tau < \frac{2\delta}{3}$.

\end{proof}

 \section{Locally Testable Codes}\label{sec:ltc}
 We start by defining locally testable codes in a formulation convenient for our use. 
\begin{definition}[Locally Testable Code\ (LTC)]
An $(r,\delta,\tau)$ LTC is a code $\cC\subset \Sigma^\cX$ with minimum distance at least $\delta$ and the following property:\\
There is a distribution $\cM$ over $r$-tuples of distinct\footnote{WLOG we can assume the tuples have distinct coordinates by adding dummy coordinates and modifying the decoding functions $\cD_{y_1,\cdots,y_r}$} coordinates such that for each codeword $f\in C$, $$\Pr_{(y_1,\cdots,y_r)\sim \cM}[\cD_{y_1,\cdots,y_r} (f(y_1),f(y_2),\cdots,f(y_r))=1]\ge 3/4$$ and for every $g\in \Sigma^{\cX}$ which is $\tau$-far away from every codeword, $$\Pr_{(y_1,\cdots,y_r)\sim \cM}[\cD_{y_1,\cdots,y_r} (g(y_1),f(y_2),\cdots,f(y_r))=1]\le 1/4$$ where $\cD_{y_1,\cdots,y_r}:\Sigma^r\to \{0,1\}$, called the testing operator, depends only on $y_1,\cdots,y_r$. \\
If furthermore $\cX$ is a vector space and $\cC$ is affine-invariant then we call it an affine invariant LTC.
\end{definition}

\begin{remark}
Let $|\Sigma|=m$, WLOG we can assume that $\Sigma=\{1,2,\cdots,m\}$.  We can extend $f:\cX\to \Sigma$ to $\hf:\cX\to \simplex_m$. The testing operator $\cD:\Sigma^r\to\{0,1\}$ can also be extended to $\hD:\simplex_m^r\to [0,1]$ as follows: For $z_1,\cdots,z_r\in \simplex_m$ define 
\begin{equation} \label{eqn-multilinear-exp}
\hD(z_1,\cdots,z_r)=\sum_{1\le \ell_1,\cdots,\ell_r\le m} \cD(\ell_1,\cdots,\ell_r) (z_1)_{\ell_1}\cdots (z_r)_{\ell_r}.
\end{equation}
Now we can rewrite the probability in terms of expectation as:
$$\Pr_{(y_1,\cdots,y_r)\sim \cM}[\cD_{y_1,\cdots,y_r} (f(y_1),\cdots,f(y_r))=1]=\E_{(y_1,\cdots,y_r)\sim \cM}[\hD_{y_1,\cdots,y_r} (\hf\circ \ell(y_1),\cdots,\hf\circ \ell(y_r))]$$
\end{remark}
We are now ready to prove the main result of this section.
\begin{theorem}[Lower bound for LTC's]
\label{thm-main-LTC}
Let $\cC\subset \Sigma^{\K^n}$ be an $(r,\delta,\delta/3)$ affine invariant LTC, then $|\cC|\le\exp\left(O_{\delta,\K,r,|\Sigma|}(n^{r-2})\right)$.
\end{theorem}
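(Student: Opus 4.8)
The plan is to mirror the LCC argument: show that two distinct codewords $f,g$ of an affine-invariant LTC must be far apart in the $U^{r-1}$ norm, and then count codewords using the $\epsilon$-net of Theorem~\ref{lem-epsilonnet}, this time the net for the $U^{r-1}$ norm, which has size $\exp(O_{\epsilon,\K,r}(n^{r-2}))$. As in the proof of Theorem~\ref{thm-main-LCC}, I would define $\phi(f)=\argmin_{h\in\cN}\Unorm{f-h}{r-1}$ for an $\epsilon/2$-net $\cN$ of the $U^{r-1}$ metric, set $\Psi(f)=(\phi(\hf_1),\dots,\phi(\hf_m))$ on the simplex-extension coordinates, and argue $\Psi$ is injective; injectivity then gives $|\cC|\le|\cN|^m=\exp(O_{\delta,\K,r,|\Sigma|}(n^{r-2}))$. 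So the whole content is the claim: if $f\ne g$ are codewords then some coordinate $i$ has $\Unorm{\hf_i-\hg_i}{r-1}>\epsilon$, for $\epsilon$ a suitable constant depending on $\delta,m,r$.

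To prove that claim, suppose for contradiction that $\Unorm{\hf_i-\hg_i}{r-1}\le\epsilon$ for all $i\in[m]$. Following the proof overview, introduce the random function $H:\K^n\to\Sigma$ where each $H(x)$ is independently $f(x)$ with probability $1/2$ and $g(x)$ with probability $1/2$. Since $\Delta(f,g)\ge\delta$, a Chernoff bound shows $H$ is at Hamming distance $\ge\delta/4$ (say) from both $f$ and $g$, hence (using the minimum distance $\delta$ and a union bound over the finitely many codewords within distance $\delta/2$ — or rather, observing $H$ is $\delta/4$-far from \emph{every} codeword) $H$ is $\tau$-far from the code with probability $1-o(1)$, where $\tau=\delta/3$; here one needs the LTC's minimum distance $\delta$ so that $H$ cannot be simultaneously close to two codewords. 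On the other hand, I claim the tester accepts $H\circ\ell$ with probability noticeably more than $1/4$, where $\ell$ is a uniformly random invertible affine map and the randomness is over $\ell$, $H$, and the tester's tuple $(y_1,\dots,y_r)\sim\cM$. Starting from $\Pr[\hD_{y_1,\dots,y_r}(\hf(\ell y_1),\dots,\hf(\ell y_r))=1]\ge 3/4$, I would bound the difference between this quantity and the corresponding expectation with $\hf$ replaced by $\hH$ at each coordinate in turn — a hybrid argument over the $r$ arguments of $\hD$. Each hybrid step replaces one $\hf(\ell y_j)$ by $\hH(\ell y_j)$; since $\E_H \hH = \tfrac12(\hf+\hg)$, the "drift" per step is governed by $\hf-\hg$ composed with $\ell$, and using the multilinear expansion \eqref{eqn-multilinear-exp}, the rank decomposition $y_i=\sum_j\lambda_{ij}v_j$, and the approximation of random invertible $\ell$ by independent $z_0,\dots,z_t$ exactly as in Theorem~\ref{thm-main-LCC}, each term becomes an average of the form $\E_{z}[(\hf-\hg)_{\ell}(\text{form})\prod(\text{bounded})(\text{forms})]$ over a system of linear forms no two of which are multiples, so the generalized von Neumann inequality (Lemma~\ref{lem-gowerscs}) bounds it by $\Unorm{(\hf-\hg)_i}{r-1}\le\epsilon$. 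Summing the $r$ hybrids and the $m^r$ terms in each gives that the tester accepts $H\circ\ell$ with probability at least $3/4-rm^r\epsilon-o(1)>1/4$ for $\epsilon$ small enough. But by affine-invariance, if the tester accepts $H\circ\ell$ with probability $>1/4$ then — averaging over $\ell$ — with positive probability over $H$, $H\circ\ell$ is \emph{not} $\tau$-far from the code; since $\cC$ is affine-invariant, being close to a codeword is preserved under $\ell^{-1}$, so $H$ itself is $\tau$-close to the code with positive probability, contradicting that $H$ is $\tau$-far from the code with probability $1-o(1)$.

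The step I expect to be the main obstacle is making the logical structure of "accepts $H\circ\ell$ with good probability $\Rightarrow$ $H$ close to a codeword" airtight, because the tester's soundness is a statement about \emph{fixed} inputs, whereas here the input $H\circ\ell$ is random. The clean way is: for each fixed outcome of $H$, either $H$ is $\tau$-far from every codeword (then for every $\ell$, $H\circ\ell$ is also $\tau$-far by affine-invariance, so the tester accepts $H\circ\ell$ with probability $\le1/4$ for that $H$, for every $\ell$), or $H$ is $\tau$-close to some codeword. Averaging the acceptance probability over $H$ and $\ell$: if the first case had probability $1-o(1)$, the overall acceptance probability would be $\le 1/4+o(1)$, contradicting the lower bound $>1/4$ derived above. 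Hence the second case has non-negligible probability — but we showed it has probability $o(1)$, contradiction. A secondary technical point is choosing the constants so that all the slacks ($o(1)$ from the $\ell$-approximation, the Chernoff deviation, the $\tau$-far threshold $\delta/3$ versus the $\delta/4$ Hamming bound on $H$, and $rm^r\epsilon$) fit together; I would set $\epsilon$ a small constant multiple of $\delta/(rm^r)$ and take $n$ large, absorbing small-$n$ cases into the constant in the exponent.
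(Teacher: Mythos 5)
Your overall strategy coincides with the paper's proof up to the last step: the reduction to an $\epsilon/2$-net for the $U^{r-1}$ metric via the map $\Psi$, the random word $H$, replacing $\hH$ by $\hh=\frac{1}{2}(\hf+\hg)$ using the multilinear expansion of $\hD$, the hybrid argument combined with the generalized von Neumann inequality (giving acceptance probability at least $3/4-rm^r\epsilon$ on average over $H,\ell$), and the contrapositive use of soundness together with averaging over $\ell$ and affine invariance are all exactly the paper's steps and are sound. The genuine gap is your claim that $H$ is $\delta/3$-far from \emph{every} codeword with probability $1-o(1)$. Your justification does not work: a Chernoff bound shows $H$ is far from $f$ and from $g$, but this says nothing about other codewords, which may a priori sit close to the support $\cH$ of $H$ even though they are far from $f$ and $g$. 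A union bound over all codewords is unavailable: for a fixed codeword $c$ one has $\E_H[\Delta(H,c)]=\frac{1}{2}(\Delta(f,c)+\Delta(g,c))\ge\delta/2$ and hence a per-codeword failure probability of only $\exp(-\Omega_\delta(|\K^n|))$, while the number of codewords is exactly the quantity the theorem is bounding and can a priori be as large as $|\Sigma|^{|\K^n|}$; invoking the target bound $\exp(O(n^{r-2}))$ here would be circular. Nor does the remark that at most one codeword lies within $\delta/2$ of $H$ help, since that candidate codeword depends on the realization of $H$ and cannot be fixed before applying concentration.

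What your averaging/soundness step actually yields (and this is where the paper's proof goes) is only that a constant fraction of the words in $\cH$ lie within distance $\delta/3$ of some codeword: the paper applies Markov over $H$ to get that at least a $1/4$ fraction of $\cH$ has a codeword in its $\delta/3$-neighborhood, and then derives the contradiction not from a ``far with high probability'' statement but from a packing argument \emph{inside} $\cH$: using the minimum distance $\delta$ of the code, it argues that the relevant $\delta/6$-balls are disjoint, each such ball contains $2^{H(\delta/6)\Delta(f,g)|\K^n|-o(|\K^n|)}$ words of $\cH$ out of $|\cH|=2^{\Delta(f,g)|\K^n|}$, and hence only an $o(1)$ fraction of $\cH$ can be $\delta/3$-close to $\cC$, contradicting the $1/4$ fraction. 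So to complete your argument you must replace the unsupported ``$H$ is $\delta/3$-far from the code w.h.p.'' claim by a counting/packing argument of this kind (any argument showing that the fraction of $\cH$ within $\delta/3$ of $\cC$ is bounded below $1/4$, exploiting the minimum distance); as written, the final contradiction does not go through.
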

\begin{proof}
Let $|\Sigma|=m$. The proof is very similar to that of Theorem~\ref{thm-main-LCC}. Let $\cN$ be an $\epsilon/2$-net for the space of all bounded functions $\{f:\K^n\to\C\}$ with the metric induced by $\Unorm{\cdot}{r-1}$-norm where $\epsilon=1/2rm^r$.  Define $\Psi:\cC\to \cN^m$ as in the proof of Theorem~\ref{thm-main-LCC}, it is enough to show that $\Psi$ is one-one.  
Suppose that $\Psi$ is not one-one. Then there exists $f,g\in\cC$ which are distinct such that $\Psi(f)=\Psi(g)$. This implies that $$\forall\ i\in [m]\ \Unorm{\hf_i-\hg_i}{r-1} \le \epsilon.$$ 

By affine invariance of $\cC$, $f\circ \ell \in \cC$ for all invertible affine maps $\ell:\K^n \to \K^n$. So  $$\E_{\ell}\E_{(y_1,\cdots,y_r)\sim \cM}[\cD_{y_1,\cdots,y_r} (f\circ \ell(y_1),f\circ \ell(y_2),\cdots,f\circ \ell(y_r))]\ge 3/4$$
where $\ell$ ranges over all invertible affine maps from $\K^n\to\K^n$. Let $H\in \Sigma^\cX$ be a random word where for each coordinate $x\in \cX$ independently,
\[
H(x)=\begin{cases}
f(x) & w.p.\ 1/2\\
g(x) & w.p.\ 1/2
\end{cases}
\]

Define $\hh:\cX\to \simplex_m$ as $\hh(x)=\E_H[\hH(x)]=\frac{\hf(x)+\hg(x)}{2}$ where $\hf,\hg$ are the simplex extensions of the original $f,g$. So $\forall\ i\in[m]\ \Unorm{\hf_i-\hh_i}{r-1}=\Unorm{\hf_i-\hg_i}{r-1}/2\le \epsilon/2$. We will now show that the test accepts $H\circ\ell$ with good probability when $\ell$ is a random invertible affine map from $\K^n\to\K^n$.
\begin{align*}
&\E_H\E_{\ell}\E_{(y_1,\cdots,y_r)\sim \cM}[\cD_{y_1,\cdots,y_r} (f\circ \ell(y_1),\cdots,f\circ \ell(y_r))-\cD_{y_1,\cdots,y_r} (H\circ \ell(y_1),\cdots,H\circ \ell(y_r))]\\
&=\E_H\E_{\ell}\E_{(y_1,\cdots,y_r)\sim \cM}[\hD_{y_1,\cdots,y_r} (\hf\circ \ell(y_1),\cdots,\hf\circ \ell(y_r))-\hD_{y_1,\cdots,y_r} (\hH\circ \ell(y_1),\cdots,\hH\circ \ell(y_r))]\\
&=\E_{\ell}\E_{(y_1,\cdots,y_r)\sim \cM}[\hD_{y_1,\cdots,y_r} (\hf\circ \ell(y_1),\cdots,\hf\circ \ell(y_r))-\hD_{y_1,\cdots,y_r} (\hh\circ \ell(y_1),\cdots,\hh\circ \ell(y_r))] \tag{by using the multilinear expansion of $\hD_{y_1,\cdots,y_r}$(Equation~\ref{eqn-multilinear-exp}) and taking expectation over $H$}\\
&= \E_{(y_1,\cdots,y_r)\sim \cM}\left[\E_{\ell}\left[\hD_{y_1,\cdots,y_r} (\hf\circ \ell(y_1),\cdots,\hf\circ \ell(y_r))-\hD_{y_1,\cdots,y_r} (\hh\circ \ell(y_1),\cdots,\hh\circ \ell(y_r))\right]\right] 
\end{align*}
Now we fix $y_1,\cdots,y_r$ and show that inner expectation is small for each tuple $(y_1,\cdots,y_r)$. Let us denote $\cD=\cD_{y_1,\cdots,y_r}$ for brevity. Let $t=\rk(y_1,\cdots,y_r)$, thus there exist independent vectors $v_1,\cdots,v_t\in \K^n$ such that for every $1\le i\le r$, $y_i=\sum_{j=1}^t \lambda_{ij}v_j$ for some fixed $\lambda_{ij}\in \K$. The action of a random invertible affine map $\ell$ can be approximated by sampling $z_0,z_1,\cdots,z_t\in \K^n$ uniformly and mapping $y_i\mapsto  z_0+\sum_{j=1}^t \lambda_{ij}z_j$ since with probability $1-o_n(1)$, $z_1,\cdots,z_t$ will be independent. Therefore,
 \begin{align*}
&\E_{\ell}\left[\hD_{y_1,\cdots,y_r} (\hf\circ \ell(y_1),\cdots,\hf\circ \ell(y_r))-\hD_{y_1,\cdots,y_r} (\hh\circ \ell(y_1),\cdots,\hh\circ \ell(y_r))\right]\\
&=_{o_n(1)}\E_{z_0,\cdots,z_t\in \K^n}\left[\hD(\hf(z_0+\sum_{j=1}^t \lambda_{1j}z_j),\cdots,\hf(z_0+\sum_{j=1}^t \lambda_{rj}z_j))-\cD(\hh(z_0+\sum_{j=1}^t \lambda_{1j}z_j),\cdots,\hh(z_0+\sum_{j=1}^t \lambda_{rj}z_j))\right]\\
&=\E_{z_0,z_1,\cdots,z_t\in \K^n}\left[\sum_{1\le \ell_1,\cdots,\ell_r\le m}\cD(\ell_1,\cdots,\ell_r)\left(\prod_{i=1}^r \hf_{\ell_i}(z_0+\sum_{j=1}^t \lambda_{ij}z_j)-\prod_{i=1}^r \hh_{\ell_i}(z_0+\sum_{j=1}^t \lambda_{ij}z_j)\right)\right]\\
&\le r\cdot m^r\cdot \frac{\epsilon}{2}=\frac{1}{4}
 \end{align*}
 where the last line is obtained by forming hybrids i.e. writing $$\hf_{\ell_1}\cdot \hf_{\ell_2} \cdots \hf_{\ell_r}- \hh_{\ell_1}\cdot \hh_{\ell_2} \cdots \hh_{\ell_r}=(\hf_{\ell_1}-\hh_{\ell_1})\cdot \hf_{\ell_2} \cdots \hf_{\ell_r}+\hh_{\ell_1}\cdot (\hf_{\ell_2}-\hh_{\ell_2}) \cdots \hf_{\ell_r}+\cdots +\hh_{\ell_1}\cdot \hh_{\ell_2} \cdots (\hf_{\ell_r}-\hh_{\ell_r})$$ and using Lemma~\ref{lem-gowerscs} for each term. Therefore
\begin{align*}
&\E_H\E_{\ell}\E_{(y_1,\cdots,y_r)\sim \cM}[\cD_{y_1,\cdots,y_r} (H\circ \ell(y_1),\cdots,H\circ \ell(y_r))]\\
&\ge \E_{\ell}\E_{(y_1,\cdots,y_r)\sim \cM}[\cD_{y_1,\cdots,y_r} (f\circ \ell(y_1),\cdots,f\circ \ell(y_r))] -\frac{1}{4} \ge \frac{3}{4}-\frac{1}{4}=\frac{1}{2}.
\end{align*}
 By Markov inequality, 
\begin{align*}
\frac{1}{4}&\le \Pr_H\left[\E_{\ell}\E_{(y_1,\cdots,y_r)\sim \cM}[\cD_{y_1,\cdots,y_r} (H\circ \ell(y_1),\cdots,H\circ \ell(y_r))]\ge \frac{1}{3}\right]\\
&\le \Pr_H\left[\exists \ell\ \ \E_{(y_1,\cdots,y_r)\sim \cM}[\cD_{y_1,\cdots,y_r} (H\circ \ell(y_1),\cdots,H\circ \ell(y_r))]\ge \frac{1}{3}\right]\\
&\le \Pr_H\left[\exists \ell\ \Delta(H\circ \ell, \cC)]\le \frac{\delta}{3}\right]\tag{by the soundness of the tester}\\
&= \Pr_H\left[ \Delta(H, \cC)]\le \frac{\delta}{3}\right]\tag{since $\ell$ is invertible and $\cC$ is affine invariant}\\ 
\end{align*} 
Let $\cH=\Supp(H)$ be the set of words between $f$ and $g$ i.e. the set of all words $e\in \Sigma^{\K^n}$ such that $e(x)=f(x)$ or $e(x)=g(x)$ for all $x\in\K^n$. We have $|\cH|=2^{\Delta(f,g)n}$. Since the distribution of $H$ is uniform in $\cH$, we proved that at least $\frac{1}{4}$ fraction of words in $\cH$ contain a codeword in their $\delta/3$ neighborhood, let $\cH'\subset \cH$ denote this subset. Therefore the $\delta/6$ neighborhoods around the points in $\cH'$ must be disjoint or else two distinct codewords will be $<\delta$ close to each other. The number of words in $\cH$ which lie in a Hamming ball of radius $\delta/6$ around a point of $\cH'$ is $$\sum_{i=0}^{\delta n/6}\binom{\Delta(f,g)n}{i}\ge 2^{H(\delta/6\Delta(f,g))\Delta(f,g)n-o(n)}\ge 2^{H(\delta/6)\Delta(f,g)n-o(n)}$$ where $H(\cdot)$ is the binary entropy function. By a packing argument, we can upper bound the size of $\cH'$ as $$|\cH'|\le \frac{2^{\Delta(f,g)n}}{2^{H(\delta/6)\Delta(f,g)n-o(n)}}=o(|\cH|).$$ This contradicts the fact that $|\cH'|\ge |\cH|/4$.
 

\end{proof}

\section{Concluding Remarks}
In this work, we proved tight lower bounds for constant query affine-invariant LCCs and LTCs when the number of queries $r$, underlying field $\K$ and the alphabet $\Sigma$ are constant. However the constants in the bounds we obtain are of Ackermann-type in $r,|\K|,|\Sigma|$ because of the use of higher-order Fourier analysis. Improving the dependence on these parameters is an open problem which might require new ideas. In a recent work, Bhowmick and Lovett~\cite{BL15_1} obtain a ``bias implies low rank" theorem for polynomials over growing fields. This might be a first step towards proving a variant of the inverse Gowers theorem (Lemma~\ref{lem-inversegowers}) for growing field size, which could then be used to make our lower bounds extend to the case of growing field size.


We also remark that our lower bounds work for any LCC or LTC where the queries are obtained as fixed linear combinations of uniformly chosen points from $\K^n$. Affine-invariant codes are a natural class of local codes where this is true. Relaxing these conditions to get lower bounds for a more general class of LCCs or LTCs is an open problem.
\section*{Acknowledgements}
We thank Madhu Sudan for helpful pointers to previous work. The second author would like to thank his advisor, Zeev Dvir, for his guidance and encouragement.
\bibliographystyle{alpha}
\bibliography{references}


\appendix
\section{Proof of generalized von Neumann inequality (Lemma~\ref{lem-gowerscs})}
\label{sec-proofofgowerscs}
Since the lemma is not stated in the form we want in~\cite{Tao12}, we will include a proof here for completeness. To prove Lemma~\ref{lem-gowerscs}, we need the following lemma first.
\begin{lemma}[Exercise 1.3.22 in~\cite{Tao12}]
\label{lem-gowerscs-basic}
Let $f:\K^n\to \C$ be a function, and for each $1\le i\le k$, let $g_i:(\K^n)^k\to\C$ be a bounded function which is independent of the $i^{th}$ coordinate of $(\K^n)^k$. Then, $$|\E_{x_1,\cdots,x_k\in \K^n}[f(x_1+x_2+\cdots+x_k)\prod_{i=1}^k g_i(x_1,\cdots,x_k)]|\le \Unorm{f}{k}$$
\end{lemma}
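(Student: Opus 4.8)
I would prove this by the standard iterated Cauchy--Schwarz argument, applying Cauchy--Schwarz once for each variable $x_1,\dots,x_k$ and using at the $i$-th application the hypothesis that $g_i$ is independent of the $i$-th coordinate in order to factor $g_i$ out of an average and then eliminate it.

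Concretely, for the first step, condition on $x_2,\dots,x_k$ and pull $g_1$ out of the average over $x_1$ (legitimate since $g_1$ does not depend on $x_1$); since $|g_1|\le 1$, Cauchy--Schwarz in the variables $(x_2,\dots,x_k)$ gives
\[
\left|\E_{x_1,\dots,x_k}\Big[f(x_1+\cdots+x_k)\prod_{i=1}^k g_i\Big]\right|^2 \le \E_{x_2,\dots,x_k}\left[\Big|\E_{x_1}\Big[f(x_1+\cdots+x_k)\prod_{i=2}^k g_i\Big]\Big|^2\right].
\]
Expanding the inner square with an independent copy $x_1'$ of $x_1$ and substituting $x_1'=x_1+h_1$, the right-hand side becomes
\[
\E_{h_1}\E_{x_1,\dots,x_k}\left[\Delta_{h_1}f(x_1+\cdots+x_k)\cdot\prod_{i=2}^k g_i^{(1)}\right],
\]
where $g_i^{(1)}(x_1,h_1,x_2,\dots,x_k):=g_i(x_1,x_2,\dots,x_k)\,\overline{g_i(x_1+h_1,x_2,\dots,x_k)}$. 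The point is that $g_1$ has disappeared, each $g_i^{(1)}$ is still bounded by $1$ and still independent of the coordinate $x_i$, $f$ has been replaced by $\Delta_{h_1}f$, and the summation is still over $x_1,\dots,x_k$ (with an extra parameter $h_1$) --- so we are back in the original situation with one fewer multiplier function.

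Iterating, at step $j$ I would peel off the multiplier obtained from $g_j$ (which is independent of $x_j$) and apply Cauchy--Schwarz in all the remaining variables, maintaining the invariant that after step $j$,
\[
\left|\E_{x_1,\dots,x_k}\Big[f(x_1+\cdots+x_k)\prod_{i=1}^k g_i\Big]\right|^{2^j} \le \E_{h_1,\dots,h_j}\E_{x_1,\dots,x_k}\left[\Delta_{h_1}\cdots\Delta_{h_j}f(x_1+\cdots+x_k)\cdot\prod_{i=j+1}^k g_i^{(j)}\right],
\]
where each $g_i^{(j)}$ is bounded by $1$ (being a $2^j$-fold product of copies of $g_i$, some of them conjugated) and still independent of the coordinate $x_i$. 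For $j=k$ no multipliers remain; since $\Delta_{h_1}\cdots\Delta_{h_k}f(x_1+\cdots+x_k)$ depends on $x_1,\dots,x_k$ only through $x:=x_1+\cdots+x_k$, which is uniform on $\K^n$, the right-hand side equals $\E_{x,h_1,\dots,h_k}[\Delta_{h_1}\cdots\Delta_{h_k}f(x)]=\Unorm{f}{k}^{2^k}$. Taking $2^k$-th roots gives the claim; note that $f$ is never assumed bounded, consistent with the statement.

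There is no substantial obstacle here --- the argument is a routine unwinding of Cauchy--Schwarz. The only thing that needs care is the bookkeeping: verifying the displayed invariant at each stage, i.e.\ that the collapsed multipliers remain bounded by $1$ and remain independent of the relevant coordinate (which is exactly what lets the iteration proceed), and keeping the substitutions organized so that the terminal expectation is correctly recognized as $\Unorm{f}{k}^{2^k}$.
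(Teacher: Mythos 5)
Your proposal is correct and is essentially the paper's own argument: one Cauchy--Schwarz per multiplier $g_i$, using $|g_i|\le 1$ and independence of the $i$-th coordinate to eliminate it, ending with $\Unorm{f}{k}^{2^k}$ after the substitutions $x_i'=x_i+h_i$. The only (cosmetic) difference is that you unroll the iteration with an explicit invariant, whereas the paper phrases it as an induction on $k$ and uses Jensen's inequality to handle the nested $1/2^{k-1}$ power coming from the induction hypothesis.
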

\begin{proof}
The proof is by induction on $k$ and using Cauchy-Schwarz inequality repeatedly. The case $k=1$ is true by definition of $\Unorm{\cdot}{1}$.
\begin{align*}
&\left|\E_{x_1,\cdots,x_k\in \K^n}\left[f(x_1+x_2+\cdots+x_k)\prod_{i=1}^k g_i(x_1,\cdots,x_k)\right]\right|\\
&=\left|\E_{x_2,\cdots,x_k}\left[g_1(x_1,\cdots,x_k)\E_{x_1}\left[f(x_1+x_2+\cdots+x_k)\prod_{i=2}^k g_i(x_1,\cdots,x_k)\right]\right]\right| \tag{since $g_1$ doesn't depend on $x_1$}\\
&\le \left|\E_{x_2,\cdots,x_k}\left[\E_{x_1'}\left[f(x_1'+x_2+\cdots+x_k)\prod_{i=2}^k g_i(x_1',x_2,\cdots,x_k)\right]\E_{x_1}\left[\bf(x_1+x_2+\cdots+x_k)\prod_{i=2}^k \bg_i(x_1,x_2,\cdots,x_k)\right]\right]\right|^{1/2} \tag{By Cauchy-Schwarz inequality and the fact that $|g_1|\le 1$}\\
&= \left|\E_{x_1,h_1}\left[\E_{x_2,\cdots,x_k}\left[\Delta_{h_1}f(x_1+x_2+\cdots+x_k)\prod_{i=2}^k g_i(x_1+h_1,x_2,\cdots,x_k)\bg_i(x_1,x_2,\cdots,x_k)\right]\right]\right|^{1/2}\tag{By substituting $x_1'=x_1+h_1$}\\
&\le \left|\E_{x_1,h_1}\left[\E_{h_2,\cdots,h_k,z}\left[\Delta_{h_k}\cdots\Delta_{h_1}f(x_1+z)\right]^{1/2^{k-1}}\right]\right|^{1/2} \tag{By induction hypothesis and the definition of Gowers norm}\\
&\le \left|\E_{x_1,h_1,h_2,\cdots,h_k,z}\left[\Delta_{h_k}\cdots\Delta_{h_1}f(x_1+z)\right]\right|^{1/2^k} \tag{By Jensen's inequality}\\
& =\left|\E_{h_1,h_2,\cdots,h_k,z}\left[\Delta_{h_k}\cdots\Delta_{h_1}f(z)\right]\right|^{1/2^k}=\Unorm{f}{k}
\end{align*}
\end{proof}

\begin{proof}[Proof of Lemma~\ref{lem-gowerscs}]
By symmetry, it is enough to show that $$|\E_{z_1,\cdots,z_m\in \K^n}[ f_0(\cL_0(z_1,\cdots,z_m))\prod_{i=1}^k f_i(\cL_i(z_1,\cdots,z_m))]|\le  \Unorm{f_0}{k}.$$ We will make a linear change of variables so that we can use Lemma~\ref{lem-gowerscs-basic} to get the required bound. For each $1\le i \le k$, since $\cL_0$ is not a multiple of $\cL_i$, there exists a vector $v_i\in \K^m$ such that $\cL_0(v_i)=1$ and $\cL_i(v_i)=0$.  Now we make the following change of variables: $(z_1,\cdots,z_m)\rightarrow (x_1,\cdots,x_m)+\sum_{i=1}^k y_iv_i^T$ where $x_1,\cdots,x_m$ and $y_1,\cdots,y_k$ are the new variables which range over $\K^n$. 
\begin{align*}
&|\E_{z_1,\cdots,z_m\in \K^n}[ f_0(\cL_0(z_1,\cdots,z_m))\prod_{i=1}^k f_i(\cL_i(z_1,\cdots,z_m))]|\\
&=\left|\E_{x_1,\cdots,x_m,y_1,\cdots,y_k\in \K^n}\left[f_0\left(\cL_0(x_1,\cdots,x_m)+\sum_{j\in [k]}y_j\right)\prod_{i\in [k]}f_i\left(\cL_i(x_1,\cdots,x_m)+\sum_{j\in [k]\setminus\{i\}}y_j\cL_i(v_j)\right)\right]\right| \tag{By change of variables and linearity of $\cL_i$}\\
&\le \E_{x_1,\cdots,x_m\in \K^n}\left[\left|\E_{y_1,\cdots,y_k\in \K^n}\left[f_0\left(\cL_0(x_1,\cdots,x_m)+\sum_{j\in [k]}y_j\right)\prod_{i\in [k]} f_i\left(\cL_i(x_1,\cdots,x_m)+\sum_{j\in [k]\setminus\{i\}}y_j\cL_i(v_j)\right)\right]\right|\right] \\
&\le  \Unorm{f_0}{k} \tag{By Lemma~\ref{lem-gowerscs-basic}}
\end{align*}
\end{proof}

\end{document}